\pdfoutput=1
\documentclass[runningheads]{llncs}

\pagestyle{plain}

\usepackage{amsmath,amssymb,xcolor,graphicx,wrapfig,paralist}
\usepackage{algorithmicx,algorithm}
\usepackage[noend]{algpseudocode}
\usepackage{booktabs}
\usepackage{proof}
\usepackage{tikz}
\usepackage{csquotes}
\usepackage{graphics}
\usepackage{stmaryrd}
\usepackage{soul}
\usepackage{array}
\usepackage{sectsty}

\usepackage{ellipsis, mparhack, ragged2e} % bugfixing
\usepackage[l2tabu, orthodox]{nag} % avoid typical LaTeX errors

\usepackage{tabu,multirow}
\usepackage{xparse}
\usepackage{mathtools}
\usepackage{environ}
\usepackage{lscape}

\usepackage{url}

\usepackage[english]{babel}
 \usepackage{hyperref}

\usepackage[caption=false]{subfig}

\usepackage{algorithm}
\usepackage{algpseudocode}

\usepackage{marginfix}
\usepackage{xifthen}

\usepackage[protrusion=true,expansion=true]{microtype}

\usepackage{listings}
\lstdefinestyle{customJ}{
  belowcaptionskip=1\baselineskip,
  breaklines=true,
  frame=L,
  xleftmargin=\parindent,
  language=Java,
  showstringspaces=false,
  basicstyle=\footnotesize\ttfamily,
  keywordstyle=\bfseries\color{green!40!black},
  commentstyle=\itshape\color{purple!40!black},
  identifierstyle=\color{blue},
  stringstyle=\color{orange},
}

\usepackage{afterpage}

\usepackage{xfrac}

% For toggle command
\usepackage{etoolbox}

\usetikzlibrary{shapes,positioning,arrows,calc,automata,matrix,fit}
\tikzstyle{state}+=[minimum size = 6mm, inner sep=0,outer sep=1]
\tikzset{->,>=stealth'}
% == TODOs

% == HIGHLIGHTING

% highlight for text mode

% highlight for math mode
\newcommand{\highlight}[1]{\colorbox{black!15}{$\displaystyle#1$}}

% == ALGORITHMS

%\algrenewcommand{\algorithmiccomment}[1]{\hskip1.5em \textbackslash *  #1 * \textbackslash}
%\renewcommand{\algorithmiccomment}[1]{\bgroup\hfill\tiny//~#1\egroup}

%define a marking command

%define a boxing command, argument = colour of box

%define some colours according to algorithm parts (or any other method you like)
\colorlet{pink}{red!40}
\colorlet{blue}{cyan!60}

% == TABLES
\newcolumntype{L}{X}
\newcolumntype{R}{>{\raggedleft\arraybackslash}X}
\newcolumntype{C}{>{\centering\arraybackslash}X}

% Hidden column
\newcolumntype{H}{>{\setbox0=\hbox\bgroup}c<{\egroup}@{}}

% == GRAPHICS
\DeclareGraphicsExtensions{.pdf, .png, .jpg}

% == LATEX

% == THEOREMS
%\newtheorem{theo}{Theorem}[section]
%[lemma]
%[lemma]
%\newtheorem{lem}[theo]{Lemma}
  
\def\qedtriangle{\hspace{\stretch1}\ensuremath\triangleleft}
\def\qedsquare{\hspace{\stretch1}\ensuremath\square}

% == SPACE

%\usepackage{microtype}
%\renewcommand{\baselinestretch}{0.975} % >=0.97

%\newcommand{\spacefu}{\vspace*{-1.2em}}
%\newcommand{\spacefl}{\vspace*{-0.8em}}
\newcommand{\myspace}{\vspace*{-0.5em}}

% == TODO

%

% == SYMBOLS

\RequirePackage{marvosym}

% == BOOLEAN

% == BASIC OPERATORS

\DeclarePairedDelimiter{\delimabs}{\lvert}{\rvert}
\DeclarePairedDelimiter{\delimnorm}{\lVert}{\rVert}
\DeclarePairedDelimiter{\delimpospart}{\lgroup}{\rgroup^+}

\DeclarePairedDelimiterX{\deliminner}[2]{\lange}{\rangle}{#1, #2}
\DeclarePairedDelimiter{\delimcardinality}{\lvert}{\rvert}
\DeclarePairedDelimiter{\delimset}{\lbrace}{\rbrace}
\DeclarePairedDelimiter{\delimtuple}{(}{)}
\DeclarePairedDelimiter{\delimlistt}{[}{]}
\DeclarePairedDelimiter{\delimfun}{(}{)}

\NewDocumentCommand{\abs}{sm}{\IfBooleanTF{#1}{\delimabs{#2}}{\delimabs*{#2}}}
\NewDocumentCommand{\norm}{sm}{\IfBooleanTF{#1}{\delimnorm{#2}}{\delimnorm*{#2}}}
\NewDocumentCommand{\pospart}{sm}{\IfBooleanTF{#1}{\delimpospart{#2}}{\delimpospart*{#2}}}
\NewDocumentCommand{\negpart}{sm}{\IfBooleanTF{#1}{\delimnetpart{#2}}{\delimnetpart*{#2}}}
\NewDocumentCommand{\inner}{sm}{\IfBooleanTF{#1}{\deliminner{#2}}{\deliminner*{#2}}}
\NewDocumentCommand{\cardinality}{sm}{\IfBooleanTF{#1}{\delimcardinality{#2}}{\delimcardinality*{#2}}}
\NewDocumentCommand{\set}{sm}{\IfBooleanTF{#1}{\delimset*{#2}}{\delimset{#2}}}
\NewDocumentCommand{\tuple}{sm}{\IfBooleanTF{#1}{\delimtuple{#2}}{\delimtuple*{#2}}}
\NewDocumentCommand{\closure}{sm}{\IfBooleanTF{#1}{\delimclosure{#2}}{\delimclosure*{#2}}}
\NewDocumentCommand{\listt}{sm}{\IfBooleanTF{#1}{\delimlistt{#2}}{\delimlistt*{#2}}}
\NewDocumentCommand{\fun}{smm}{\IfBooleanTF{#1}{{#2}\delimfun{#3}}{{#2}\delimfun*{#3}}}
\NewDocumentCommand{\funMacro}{smm}{\IfNoValueTF{#3}{#1}{\fun{#2}{#3}}}

\DeclareMathOperator{\ExistsOp}{\exists}
\DeclareMathOperator{\ForallOp}{\forall}

\NewDocumentCommand{\Exists}{gg}{\IfNoValueTF{#1}{\ExistsOp}{\ExistsOp #1. \, #2}}
\NewDocumentCommand{\Forall}{gg}{\IfNoValueTF{#1}{\ForallOp}{\ForallOp #1. \, #2}}

% == SETS OPERATORS

\newcommand{\intersectionSym}{\cap}
\newcommand{\intersectionBin}{\mathbin{\intersectionSym}}
\newcommand{\UnionSym}{\bigcup}

\newcommand{\intersection}{\intersectionBin}
\newcommand{\Union}{\UnionSym}

% == BASIC SETS

\newcommand{\Naturals}{\mathbb{N}}

\newcommand{\Reals}{\mathbb{R}}

\newcommand{\Distributions}{\mathcal{D}}

% == LIMITS

\NewDocumentCommand{\convto}{G{}}{\xrightarrow{#1}}
\NewDocumentCommand{\weakto}{G{}}{\xrightharpoonup{#1}}
\NewDocumentCommand{\weakstarto}{G{}}{\xrightharpoonup[*]{#1}}

% == OPERATORS

 \DeclareDocumentCommand{\diff}{D<>{} O{}  D(){}}{\Delta_{#1}^{#2}\ifthenelse{\isempty{#3}}{}{(#3)}}

\DeclareDocumentCommand{\post}{D<>{} O{} D(){}}{\mathsf{Post}_{#1}^{#2}\ifthenelse{\isempty{#3}}{}{(#3)}}
\DeclareMathOperator{\leaves}{\mathbin{\mathop{exits}}}

\newcommand{\eqdef}{\vcentcolon=}

 % quod erat exemplandum % TODO del

% == LOGIC

% == COMPLEXITY

% == PROBABILITY
\NewDocumentCommand{\distributions}{d()}{\funMacro{\mathcal{D}}{#1}}

% == MC, MPD, Games

 %value

\newcommand{\pmin}{p_{\min}}

\DeclareDocumentCommand{\val}{D<>{} O{}  D(){} t'}{\mathsf{V}_{#1}^{\IfBooleanTF{#4}{\prime #2}{#2}}\ifthenelse{\isempty{#3}}{}{(#3)}}
\DeclareDocumentCommand{\ub}{D<>{} O{}  D(){} t'}{\mathsf{U}_{#1}^{\IfBooleanTF{#4}{\prime #2}{#2}}\ifthenelse{\isempty{#3}}{}{(#3)}}
\DeclareDocumentCommand{\gub}{D<>{} O{}  D(){} t'}{\mathsf{G}_{#1}^{\IfBooleanTF{#4}{\prime #2}{#2}}\ifthenelse{\isempty{#3}}{}{(#3)}}
\DeclareDocumentCommand{\lb}{D<>{} O{}  D(){} t'}{\mathsf{L}_{#1}^{\IfBooleanTF{#4}{\prime #2}{#2}}\ifthenelse{\isempty{#3}}{}{(#3)}}
\DeclareDocumentCommand{\game}{D<>{} O{} D(){} t'}{\mathsf{G}_{#1}^{\IfBooleanTF{#4}{\prime}{}#2}\ifthenelse{\isempty{#3}}{}{(#3)}}
\DeclareDocumentCommand{\transition}{D<>{} O{} D(){}}{\rightarrow_{#1}^{#2}\ifthenelse{\isempty{#3}}{}{(#3)}}

\newcommand{\SG}{\textrm{SG}}

\DeclareDocumentCommand{\G}{D<>{} O{} t' D(){}}{\mathsf{G}_{#1}^{\IfBooleanTF{#3}{\prime}{}#2}\ifthenelse{\isempty{#4}}{}{(#4)}}
\DeclareDocumentCommand{\exGame}{D<>{} O{} t'  D(){}}{\mathsf{G}_{#1}^{\IfBooleanTF{#3}{\prime}{}#2}\ifthenelse{\isempty{#4}}{}{(#4)}=(\states<#1>[\IfBooleanTF{#3}{\prime}{}#2],\states<\Box\ifthenelse{\isempty{#1}}{}{,#1}>[\IfBooleanTF{#3}{\prime}{}#2],\states<\circ\ifthenelse{\isempty{#1}}{}{,#1}>[\IfBooleanTF{#3}{\prime}{}#2],\istate<#1>[\IfBooleanTF{#3}{\prime}{}#2],\actions<#1>[\IfBooleanTF{#3}{\prime}{}#2],\Av<#1>[\IfBooleanTF{#3}{\prime}{}#2],\trans<#1>[\IfBooleanTF{#3}{\prime}{}#2])}

\DeclareDocumentCommand{\states}{D<>{} O{}  t'}{\mathsf{S}_{#1}^{\IfBooleanTF{#3}{\prime~#2}{#2}}}
\DeclareDocumentCommand{\state}{D<>{} O{}  t'}{\mathsf{s}_{#1}^{\IfBooleanTF{#3}{\prime #2}{#2}}}
\DeclareDocumentCommand{\istate}{D<>{} O{} t'}{\mathsf{s}_{0\ifthenelse{\isempty{#1}}{}{,#1}}^{\IfBooleanTF{#3}{\prime~#2}{#2}}}

\newcommand{\initstate}{\state<0>}

\DeclareDocumentCommand{\trans}{D<>{} O{} t' D(){} D(){}}{\mathbb{T}{#1}^{\IfBooleanTF{#3}{\prime}{}#2}\ifthenelse{\isempty{#4}}{}{(#4)}\ifthenelse{\isempty{#5}}{}{(#5)}}
\DeclareDocumentCommand{\Av}{D<>{} O{} t' D(){}}{\mathsf{Av}_{#1}^{\IfBooleanTF{#3}{\prime}{}#2}\ifthenelse{\isempty{#4}}{}{(#4)}}

\DeclareDocumentCommand{\F}{D<>{} O{} t' D(){}}{\mathsf{F}_{#1}^{\IfBooleanTF{#3}{\prime}{}#2}\ifthenelse{\isempty{#4}}{}{(#4)}}

% = paths and strategies
\newcommand{\Path}{\rho}
\DeclareDocumentCommand{\Path}{D<>{} O{} t' D(){}}{\path<#1>[#2]\IfBooleanTF{#3}{'}{}(#4)}
\DeclareDocumentCommand{\path}{D<>{} O{} t' D(){}}{\rho_{#1}^{\IfBooleanTF{#3}{\prime}{}#2}\ifthenelse{\isempty{#4}}{}{(#4)}}
\newcommand{\fpath}{\mathsf{w}}
\DeclareDocumentCommand{\Paths}{D<>{} O{} t' D(){}}{\Omega_{#1}^{\IfBooleanTF{#3}{\prime}{}#2}\ifthenelse{\isempty{#4}}{}{(#4)}}
\newcommand{\straa}{\sigma}

\newcommand{\strab}{\tau}

\DeclareDocumentCommand{\strategy}{D<>{} O{} D(){}
  t*}{{\IfBooleanTF{#4}{\tau}{\sigma}}_{#1}^{#2}\ifthenelse{\isempty{#3}}{}{(#3)}}

% = probability

\DeclareDocumentCommand{\actions}{D<>{} O{} t' d()}{{\IfNoValueTF{#4}{\mathsf{A}}{\fun{\mathsf{A}}{#4}}}_{#1}^{\IfBooleanTF{#3}{\prime~#2}{#2}}}
\DeclareDocumentCommand{\action}{D<>{} O{} t'}{\mathsf{a}_{#1}^{\IfBooleanTF{#3}{\prime#2}{#2}}}

\newcommand{\mec}{\mathsf{MEC}}

\newcommand{\pr}{\mathbb P}
   %\Pr{strat}{state}{event}
   %\Pr{strat}{state}{event}

   %\Ex{strat}{state}{f}
   %\Ex{strat}{state}{f}

% == LOCAL

%{\vec{p^{sat}}}
%{\mathit{p^{sat}}}

  %\lr{rew}{run}
  %\lrS{rew}{run}
  %\lrI{rew}{run}
  %\lrI{rew}{run}

% == OTHER

% == Tikz
\newcommand{\drawcirc}{\node[draw,circle,minimum size=.7cm, outer sep=1pt]}
\newcommand{\drawbox}{\node[draw,rectangle,minimum size=.7cm, outer sep=1pt]}
\newcommand{\drawdummy}{\node[minimum size=0,inner sep=0]}

\newcommand{\para}[1]{\noindent\textbf{#1} }

\newcommand{\sinks}{\mathsf{Zero}}

% ===== Fruit
% ===== Old but still used
\newcommand{\best}{\mathsf{best}}
\newcommand{\Vis}{\widehat{\states}}
\DeclareDocumentCommand{\target}{D<>{} O{}
  t'}{\mathfrak{1}_{#1}^{\IfBooleanTF{#3}{\prime}{}#2}}
\NewDocumentCommand{\exit}{D<>{}D[]{}D(){}}{\mathsf{bestExit}_{#1}^{#2}\ifthenelse{\isempty{#3}}{}{(#3)}}

% ====== New

\newcommand{\postmax}{\post<\max>}
\newcommand{\targetset}{\mathsf{Goal}}
\newcommand{\sink}{\mathfrak{0}} %set of states with no path to target, needed for bellman equations

\newcommand{\Prob}{\mathbb{P}}

\newcommand{\tsucc}{\mathsf{t}}

\newcommand{\INITIALIZE}{\mathsf{INITIALIZE\_BOUNDS}}

\newcommand{\UPDATE}{\mathsf{UPDATE}}
\newcommand{\SIMULATE}{\mathsf{SIMULATE}}
\newcommand{\STUCK}{\mathsf{LOOPING}}
\newcommand{\SUREEC}{\mathsf{\deltasure~EC}}
\newcommand{\FIND}{\mathsf{FIND\_MSECs}}
\newcommand{\DEFLATE}{\mathsf{DEFLATE}}

\newcommand{\transdelta}{\delta_{\trans}}
\newcommand{\deltasure}{\transdelta\textit{-sure}}
\newcommand{\stepsUntilSure}{\textit{requiredSamples}}
\newcommand{\theN}{\mathcal{N}_k}
\newcommand{\deltaIter}{\delta_{k}}

%%% Local Variables:
%%% mode: latex
%%% TeX-master: "../paper"
%%% End:

\newtoggle{arxiv}

\newcommand{\arxivcite}[2]{\iftoggle{arxiv}{#1}{#2}}

% set status globally in the preamble
\toggletrue{arxiv}

% for all space hacks, gaps between captions, floats, texts and so on
% SPACE HACKS
%\titlespacing*{\section}{0pt}{12pt plus 4pt minus 4pt}{12pt plus 2pt minus 2pt}
%\titlespacing*{\subsection}{0pt}{10pt plus 3pt minus 3pt}{10pt plus 2pt minus 1pt}
%\titlespacing\subsubsection{0pt}{8pt plus 2pt minus 2pt}{8pt plus 1pt minus 1pt}

%\renewcommand{\baselinestretch}{0.98} % >=0.97
\advance\textwidth3mm % <= 5mm
\advance\hoffset-1.5mm
\advance\textheight1mm
%\advance\voffset-3mm

%\setlength{\marginparwidth}{3.3cm}
%\setlength{\marginparsep}{0.1cm}
%\setlength{\textfloatsep}{0.8\textfloatsep} % Space below/above a float if its [t]
%\setlength{\intextsep}{0.8\intextsep} % Space below/above a float if its [h]

\makeatletter
\renewcommand\section{\@startsection {section}{1}{\z@}%
      {-2ex \@plus -1ex \@minus -.2ex}% <beforeskip>
      {1ex \@plus .2ex}% <afterskip>
      {\normalfont\Large\bfseries\SS@sectfont}}
\renewcommand\subsection{\@startsection{subsection}{2}{\z@}%
      {-2ex\@plus -1ex \@minus -.2ex}% <beforeskip>
      {1ex \@plus .2ex}% <afterskip>
      {\normalfont\large\bfseries\SS@subsectfont}}
\makeatother

% Gaps between titles

% \titlespacing*{<command>}{<left>}{<before-sep>}{<after-sep>}
%\titlespacing*{\section} {0pt}{3.5ex plus 1ex minus .2ex}{2.3ex plus .2ex}
% \titlespacing*{\subsection} {0pt}{3.25ex plus 1ex minus .2ex}{1.5ex plus .2ex}
% \titlespacing*{\subsubsection}{0pt}{3.25ex plus 1ex minus .2ex}{1.5ex plus .2ex}
% \titlespacing*{\paragraph} {0pt}{3.25ex plus 1ex minus .2ex}{1em}
% \titlespacing*{\subparagraph} {\parindent}{3.25ex plus 1ex minus .2ex}{1em}

% Alter some LaTeX defaults for better treatment of figures:
% See p.105 of "TeX Unbound" for suggested values.
% See pp. 199-200 of Lamport's "LaTeX" book for details.
%   General parameters, for ALL pages:
	% max fraction of floats at top
	% max fraction of floats at bottom
%   Parameters for TEXT pages (not float pages):
% \setcounter{topnumber}{2} % number of floats that can be displayed at the top
% \setcounter{bottomnumber}{2} % number of floats that can be displayed at the bottom
% \setcounter{totalnumber}{4}     % 2 may work better
% \setcounter{dbltopnumber}{2}    % for 2-column pages
% \renewcommand{\dbltopfraction}{0.9}	% fit big float above 2-col. text
	% allow minimal text w. figs
%   Parameters for FLOAT pages (not text pages):
	% require fuller float pages
% N.B.: floatpagefraction MUST be less than topfraction !!
% \renewcommand{\dblfloatpagefraction}{0.7}	% require fuller float pages
% remember to use [htp] or [htpb] for placement

% A good document on space hacks
% http://www-h.eng.cam.ac.uk/help/tpl/textprocessing/squeeze.html

% space left between floats (12.0pt plus 2.0pt minus 2.0pt).
\setlength{\floatsep}{10pt}
% space between last top float or first bottom float and the text (20.0pt plus 2.0pt minus 4.0pt).
\setlength{\textfloatsep}{16pt}
% space left on top and bottom of an in-text float (12.0pt plus 2.0pt minus 2.0pt).
\setlength{\intextsep}{10pt}
%\dbltextfloatsep is \textfloatsep for 2 column output (20.0pt plus 2.0pt minus 4.0pt).
%\dblfloatsep is \floatsep for 2 column output (12.0pt plus 2.0pt minus 2.0pt).
% space above caption (10.0pt).
\setlength{\abovecaptionskip}{10pt}
% space below caption (0.0pt).
\setlength{\belowcaptionskip}{0pt}
\renewcommand{\circ}{\bigcirc}

%\usepackage{setspace}
%\setstretch{0.98}
%\def\baselinestretch{0.98}

\usepackage[firstpage]{draftwatermark}
\SetWatermarkText{\hspace*{5in}\raisebox{6in}{\includegraphics[scale=0.1]{}}}
\SetWatermarkAngle{0}

\title{PAC Statistical Model Checking for Markov Decision Processes and Stochastic Games%
\thanks{This research was funded in part by TUM IGSSE Grant 10.06 (PARSEC), the Czech Science Foundation grant No. 18-11193S, and the German Research Foundation (DFG) project KR 4890/2-1 ``Statistical Unbounded Verification''. We thank Florent Delgrange for his valuable feedback on the proof of Theorem 2.}
}
\author{Pranav Ashok \and Jan K\v ret\'insk\'y \and Maximilian Weininger}
\institute{Technical University of Munich, Germany} 

\begin{document}

\maketitle

\myspace
\myspace
\myspace
\myspace

\begin{abstract}
Statistical model checking (SMC) is a technique for analysis of probabilistic systems that may be (partially) unknown.
We present an SMC algorithm for (unbounded) reachability yielding probably approximately correct (PAC) guarantees on the results.
We consider both the setting (i) with no knowledge of the transition function (with the only quantity required a bound on the minimum transition probability) and (ii) with knowledge of the topology of the underlying graph.
On the one hand, it is the first algorithm for stochastic games.
On the other hand, it is the first practical algorithm even for Markov decision processes.
Compared to previous approaches where PAC guarantees require running times longer than the age of universe even for systems with a handful of states, our algorithm often yields reasonably precise results within minutes, not requiring the knowledge of mixing time.
\end{abstract} 

\myspace
\myspace

\section{Introduction}

\para{Statistical model checking (SMC)} \cite{YS02} is an analysis technique for probabilistic systems based on 
\begin{compactenum}
	\item simulating finitely many finitely long runs of the system,
	\item statistical analysis of the obtained results,
	\item yielding a confidence interval/probably approximately correct (PAC) result on the probability of satisfying a given property, i.e., there is a non-zero probability that the bounds are incorrect, but they are correct with probability that can be set arbitrarily close to $1$.
\end{compactenum}
One of the advantages is that it can avoid the state-space explosion problem, albeit at the cost of weaker guarantees.
Even more importantly, this technique is applicable even when the model is not known (\emph{black-box} setting) or only qualitatively known (\emph{grey-box} setting), where the exact transition probabilities are unknown such as in many cyber-physical systems.

In the basic setting of Markov chains \cite{norris1998markov} with (time- or step-)bounded properties, the technique is very efficient and has been applied to numerous domains, e.g.\ biological \cite{DBLP:conf/cmsb/JhaCLLPZ09,DBLP:conf/cmsb/PalaniappanG0HT13}, hybrid \cite{DBLP:conf/hybrid/ZulianiPC10,DBLP:journals/corr/abs-1208-3856,DBLP:conf/formats/EllenGF12,DBLP:conf/formats/Larsen12} or cyber-physical \cite{DBLP:conf/forte/BasuBBCDL10,DBLP:conf/atva/ClarkeZ11,DBLP:conf/nfm/DavidDLLM13} systems and a substantial tool support is available \cite{DBLP:conf/tacas/JegourelLS12,DBLP:journals/corr/abs-1207-1272,DBLP:conf/qest/BoyerCLS13,DBLP:conf/mmb/BogdollHH12}.
In contrast, whenever either (i)~infinite time-horizon properties, e.g. reachability, are considered or (ii)~non-determinism is present in the system, providing any guarantees becomes significantly harder.

Firstly, for \emph{infinite time-horizon properties} we need a stopping criterion such that the infinite-horizon property can be reliably evaluated based on a finite prefix of the run yielded by simulation.
This can rely on the the complete knowledge of the system (\emph{white-box} setting) \cite{sbmf11,DBLP:journals/apal/LassaigneP08}, the topology of the system (grey box) \cite{sbmf11,ase10}, or a lower bound $\pmin$ on the minimum transition probability in the system (black box) \cite{DHKP16,BCC+14}.

Secondly, for Markov decision processes (MDP) \cite{Puterman} with (non-trivial) \emph{non-determinism}, \cite{DBLP:conf/qest/HenriquesMZPC12} and \cite{LP12} employ reinforcement learning~\cite{SB98} in the setting of bounded properties or discounted (and for the purposes of approximation thus also bounded) properties, respectively.
The latter also yields PAC guarantees.

Finally, for MDP with unbounded properties, \cite{DBLP:conf/forte/BogdollFHH11} deals with MDP with spurious non-determinism, where the way it is resolved does not affect the desired property.
The general non-deterministic case is treated in \cite{DBLP:conf/rss/FuT14,BCC+14}, yielding PAC guarantees.
However, the former requires the knowledge of mixing time, which is at least as hard to compute; the algorithm in the latter is purely theoretical since before a single value is updated in the learning process, one has to simulate longer than the age of universe even for a system as simple as a Markov chain with 12 states having at least 4 successors for some state.

\para{Our contribution} is an SMC algorithm with PAC guarantees for (i) MDP and unbounded properties, which runs for realistic benchmarks \cite{qcomp} and confidence intervals in orders of minutes, and (ii) is the first algorithm for stochastic games (SG).
It relies on different techniques from literature.
\begin{enumerate}
	\item The increased practical performance rests on two pillars:
	\begin{itemize}
		\item extending early detection of bottom strongly connected components in Markov chains by \cite{DHKP16} to end components for MDP and simple end components for SG;
		\item improving the underlying PAC Q-learning technique of \cite{Strehl}: 
		\begin{enumerate}
			\item learning is now model-based with better information reuse instead of model-free, but in realistic settings with the same memory requirements,
			\item better guidance of learning due to interleaving with precise computation, which yields more precise value estimates.
			\item splitting confidence over all relevant transitions, allowing for variable width of confidence intervals on the learnt transition probabilities.
		\end{enumerate}
	\end{itemize}
	\item The transition from algorithms for MDP to SG is possible via extending the over-approximating value iteration from MDP \cite{BCC+14} to SG by \cite{KKKW18}.
\end{enumerate}
To summarize, we give an anytime PAC SMC algorithm for (unbounded) reachability.
It is the first such algorithm for SG and the first practical one for MDP.

%For this, we had to do several things
%\begin{itemize}
%    \item Improve UPDATE from \cite{BCC+14} to be feasible
%    \item Extend (and improve?) BSCC detection from \cite{DHKP16} to detect SECs quickly
%    \item Use the methods of \cite{KKKW18} to handle ECs/SECs.
%    \item Use the two-phase approach to get statistical guarantees.
%\end{itemize}

\subsection*{Related work}
%The stuff here \url{https://docs.google.com/spreadsheets/d/1IhXLPRPCWdl2M_ZvDePZB1F8nQXtxC6RLEx7H03902E/edit#gid=1279994026}, and more.

Most of the previous efforts in SMC have focused on the analysis of properties 
with \emph{bounded} horizon 
\cite{Younes02,Sen04,DBLP:journals/sttt/YounesKNP06,DBLP:conf/cmsb/JhaCLLPZ09,DBLP:conf/tacas/JegourelLS12,DBLP:journals/corr/abs-1207-1272}.

SMC of \emph{unbounded} properties was first considered in \cite{vmcai04} and the first approach was proposed in \cite{cav05}, but observed incorrect in \cite{ase10}.
Notably, in~\cite{sbmf11} two approaches are described. 
The {first approach} proposes to terminate sampled paths at every step with some probability $p_{term}$ and re-weight the result accordingly.
In order to guarantee the asymptotic convergence of this method, the second eigenvalue $\lambda$ of 
the chain and its mixing time must be computed, which is as hard as the verification problem itself and requires the complete knowledge of the system ({white box} setting).
The correctness of \cite{DBLP:journals/apal/LassaigneP08} 
relies on the knowledge of the second eigenvalue $\lambda$, too.
The {second approach} of~\cite{sbmf11} requires the knowledge of the chain's 
topology (grey box), which is used to transform the chain so that all potentially 
infinite paths are eliminated.
In \cite{ase10}, a similar transformation is performed, again requiring 
knowledge of the topology.
In \cite{DHKP16}, only (a~lower bound on) the minimum transition probability $\pmin$ is assumed and PAC guarantees are derived.
While unbounded properties cannot be analyzed without any  information on the system, knowledge of $\pmin$ is a relatively light assumption in many realistic scenarios \cite{DHKP16}. 
For instance, bounds on the rates for reaction kinetics in chemical reaction systems are typically known; for models in the PRISM language \cite{prism}, the bounds can be easily inferred without constructing the respective state space.
In this paper, we thus adopt this assumption.

	In the case with general \emph{non-determinism},
	one approach is to give the non-determinism a probabilistic semantics,
	e.g., using a uniform distribution instead, as for timed automata
	in \cite{DBLP:conf/formats/DavidLLMPVW11,DBLP:conf/cav/DavidLLMW11,DBLP:conf/ifm/Larsen13}.
	Others~\cite{LP12,DBLP:conf/qest/HenriquesMZPC12,BCC+14} aim to quantify over all strategies and produce an $\epsilon$-optimal strategy.
	In \cite{DBLP:conf/qest/HenriquesMZPC12}, candidates for optimal strategies are generated and gradually improved, but ``at any given point we cannot quantify how close to optimal the candidate scheduler is'' (cited from~\cite{DBLP:conf/qest/HenriquesMZPC12}) and the algorithm 
	%does not estimate the maximum probability of the property'' 
	``does not in general converge to the true optimum''
	(cited from~\cite{DBLP:conf/sefm/LegayST14}). Further, \cite{DBLP:conf/sefm/LegayST14,DBLP:journals/sttt/DArgenioLST15,DBLP:conf/isola/DArgenioHS18} randomly sample compact representation of strategies, resulting in useful lower bounds if $\varepsilon$-schedulers are frequent.
	\cite{DBLP:conf/tacas/HahnPSSTW19} gives a convergent model-free algorithm (with no bounds on the current error) and identifies that the previous \cite{DBLP:conf/cdc/SadighKCSS14} ``has two faults, the second of which also affects approaches [...] \cite{DBLP:journals/corr/abs-1801-08099,DBLP:journals/corr/abs-1902-00778}''.

	Several approaches provide SMC for MDPs and unbounded properties with \emph{PAC guarantees}.
Firstly, similarly to \cite{DBLP:journals/apal/LassaigneP08,sbmf11}, \cite{DBLP:conf/rss/FuT14} requires (1) the mixing time $T$ of the MDP. The algorithm then yields PAC bounds in time polynomial in $T$ (which in turn can of course be exponential in the size of the MDP). Moreover, the algorithm requires (2) the ability to restart simulations also in non-initial states, (3) it only returns the strategy once all states have been visited (sufficiently many times), and thus (4) requires the size of the state space $|S|$.
Secondly, \cite{BCC+14}, based on delayed Q-learning (DQL) \cite{Strehl}, lifts the assumptions (2) and (3) and instead of (1) mixing time requires only (a bound on) the minimum transition probability $\pmin$.
Our approach additionally lifts the assumption (4) and allows for running times faster than those given by $T$, even without the knowledge of $T$.

    Reinforcement learning (without PAC bounds) for stochastic games has been considered already in \cite{DBLP:journals/mor/LakshmivarahanN81,DBLP:conf/icml/Littman94,DBLP:conf/ijcai/BrafmanT99}.
	\cite{DBLP:conf/ijcai/WenT16} combines the special case of almost-sure satisfaction of a specification with optimizing quantitative objectives.
	We use techniques of \cite{KKKW18}, which however assumes access to the transition probabilities.

\section{Preliminaries}

\subsection{Stochastic games}
A \emph{probability distribution} on a finite set $X$ is a mapping $\delta: X \to [0,1]$, such that $\sum_{x\in X} \delta(x) = 1$.
The set of all probability distributions on $X$ is denoted by $\Distributions(X)$.
Now we define turn-based two-player stochastic games.
As opposed to the notation of e.g.\ \cite{condonComplexity}, we do not have special stochastic nodes, but rather a probabilistic transition function.
\begin{definition}[\SG]
	A \emph{stochastic game ($\SG$)} is a tuple 
	$\exGame$,
	where $\states$ is a finite set of \emph{states} partitioned\footnote{I.e., $\states<\Box> \subseteq \states$, $\states<\circ> \subseteq \states$, $\states<\Box> \cup \states<\circ> = \states$, and $\states<\Box> \cap \states<\circ> = \emptyset$.} 
	into the sets $\states<\Box>$ and $\states<\circ>$ of states of the player \emph{Maximizer} and \emph{Minimizer}\footnote{The names are chosen, because Maximizer maximizes the probability of reaching a given target state, and Minimizer minimizes it.}, respectively
	$\initstate \in \states$ is the \emph{initial} state, $\actions$ is a finite set of \emph{actions}, $\Av: \states \to 2^{\actions}$ assigns to every state a set of \emph{available} actions, and $\trans: \states \times \actions \to \Distributions(\states)$ 
	is a \emph{transition function} that given a state $\state$ and an action $\action\in \Av(\state)$  yields a probability distribution over \emph{successor} states.
	Note that for ease of notation we write $\trans(\mathsf s,\mathsf a,\mathsf t)$ instead of $\trans(\mathsf s,\mathsf a)(\mathsf t)$. 
\end{definition}
A {Markov decision process (MDP)} is a special case of $\SG$ where $\states<\circ> = \emptyset$.
A Markov chain (MC) can be seen as a special case of an MDP, where for all $\state \in \states: \abs{\Av(\state)} = 1$.
We assume that $\SG$ are non-blocking, so for all states $\state$ we have $\Av(\state) \neq \emptyset$.
%Note that, since we will operate in black, we cannot assume any pre-processing that unifies targets and sinks

For a state $\state$ and an available action $\action \in \Av(\state)$, we denote the set of successors by $\post(\state,\action) := \set{\tsucc \mid \trans(\state,\action,\tsucc) > 0}$. 
We say a state-action pair $(\state, \action)$ is an \emph{exit} of a set of states $T$, written $(\state,\action)\leaves T$, if $\exists \tsucc \in \post(\state,\action):\tsucc \notin T$, i.e., if with some probability a successor outside of $T$ could be chosen.
%For any set of states $T \subseteq \states$, we use $T_\Box$ and $T_\circ$ to denote the states of $T$ that belong to Maximizer and Minimizer, whose states are drawn in the figures as $\Box$ and $\circ$, respectively. => Only needed for bestExit, so only state there

We consider algorithms that have a limited information about the SG.

\begin{definition}[Black box and grey box]\label{def:limit}
An algorithm inputs an SG as \emph{black box} if it cannot access the whole tuple, but
\begin{itemize}
	\item it knows the initial state,
	\item for a given state, an oracle returns its player and available action,
	\item given a state $\state$ and action $\action$, it can sample a successor $\tsucc$ according to $\trans(\state,\action)$,\footnote{Up to this point, this definition conforms to black box systems in the sense of \cite{Sen04} with sampling from the initial state, being slightly stricter than \cite{Younes02} or \cite{atva09}, where simulations can  be run from any desired state.
	Further, we assume that we can choose actions for the adversarial player or that she plays fairly. Otherwise the adversary could avoid playing her best strategy during the SMC, not giving SMC enough information about her possible behaviours.}

    \item it knows $p_{\min} \leq \min_{\substack{\state \in \states,\action \in \Av(\state)\\ \tsucc \in \post(\state,\action)}} \trans(\state,\action,\tsucc)$, an under-approximation of the minimum transition probability.
\end{itemize}
When input as \emph{grey box} it additionally knows the number $\abs{\post(\state,\action)}$ of successors for each state $\state$ and action $\action$.\footnote{This requirement is slightly weaker than the knowledge of the whole topology, i.e. $\post(\state,\action)$ for each $\state$ and $\action$.}
\end{definition}

%\begin{remark}\todo{polish}
%We assume the adversarial states are "playable/observable" in some sense.
%I.e. during simulation either we can really choose the action in that state (which can be sensible in settings like e.g. chess; we know what the opponent can do and hence can simulate him); or the opponent plays random in some fair manner and we see which action he uses; then, given enough simulations, we can always get more simulations for a certain action. 
%If the opponent was already adversarial during the simulations, he could hide a part of the state space from us, so we do not learn anything about it, and then when applying the synthesized controller, he goes there and we underperform for lack of information.
%\end{remark}

The semantics of SG is given in the usual way by means of strategies and the induced Markov chain \cite{BaierBook} and its respective probability space, as follows.
An \emph{infinite path} $\path$ is an infinite sequence $\path = \state<0> \action<0> \state<1> \action<1> \cdots \in (\states \times \actions)^\omega$, such that for every $i \in \Naturals$, $\action<i>\in \Av(\state<i>)$ and $\state<i+1> \in \post(\state<i>,\action<i>)$.
%\emph{Finite path}s are defined analogously as elements of $(\states \times \actions)^\ast \times \states$.

%Since this paper deals with the reachability objective, we can restrict our attention to memoryless (positional) strategies, which are optimal for this objective~\cite{gameSurvey}\todo{citation ok?}.
%We still allow randomizing strategies, because they are needed for the simulation-based algorithm.

A \emph{strategy} of Maximizer or Minimizer is a function $\straa: \states<\Box> \to \distributions(\actions)$ or $\states<\circ> \to \distributions(\actions)$, respectively, such that $\straa(\state) \in \distributions(\Av(\state))$ for all $\state$.
Note that we restrict to memoryless strategies, as they suffice for reachability in SGs~\cite{gameSurveyKrish}. 
%We call a strategy \emph{deterministic} if it maps to Dirac distributions only; otherwise, it is \emph{randomizing}.
%Although randomization is not necessary for reachability objective in stochastic games, our algorithm uses it as it allows for simpler description.
A pair $(\straa,\strab)$ of strategies of Maximizer and Minimizer induces a Markov chain $\G[\straa,\strab]$ with states $\states$, $\initstate$ being initial, and the transition function $\trans(\state)(\tsucc) = \sum_{\action \in \Av(\state)} \straa(\state) (\action) \cdot \trans(\state, \action,\tsucc)$ for states of Maximizer and analogously for states of Minimizer, with $\straa$ replaced by $\strab$.
The Markov chain %together with a state $\state$ 
induces a unique probability distribution $\pr^{\straa,\strab}$ over measurable sets of infinite paths \cite[Ch.~10]{BaierBook}. 

\subsection{Reachability objective}\label{sec:prelimReach}
For a goal set $\targetset\subseteq\states$, we write $\Diamond \targetset:=\set{\state<0> \action<0> \state<1> \action<1> \cdots  \mid \exists i \in \Naturals: \state<i>
	\in\targetset}$ to denote the (measurable) set of all infinite paths which eventually reach $\targetset$.
For each $\state\in\states$, we define
the \emph{value} in $\state$ as 
\[\val(\state) \eqdef \sup_{\straa} \inf_{\strab} \pr_{s}^{\straa,\strab}(\Diamond \targetset)= \inf_{\strab} \sup_{\straa}\pr_{s}^{\straa,\strab}(\Diamond \targetset),\]

\noindent where the equality follows from~\cite{leastReadablePaperJanEverRead}.
We are interested in $\val(\initstate)$, its $\varepsilon$-approximation and the corresponding ($\varepsilon$-)optimal strategies for both players.

Let $\sinks$ be the set of states, from which there is no finite path to any state in $\targetset$. 
The value function $\val$ satisfies the following system of equations, which is referred to as the \emph{Bellman equations}:
\begin{equation*}\label{eq:Vs}
\val(\state) =  \begin{cases} \max_{\action \in \Av(\state)}\val(\state,\action)		&\mbox{if } \state \in \states<\Box>  \\
\min_{\action \in \Av(\state)}\val(\state,\action) &\mbox{if } \state \in \states<\circ>\\
1 &\mbox{if } \state \in \targetset \\
0 &\mbox{if } \state \in \sinks
\end{cases}
\end{equation*}
with the abbreviation $\val(\state,\action) \eqdef \sum_{s' \in S} \trans(\state,\action,\state') \cdot \val(\state')$.
Moreover, $\val$ is the \emph{least} solution to the Bellman equations, see e.g. \cite{visurvey}.

\subsection{Bounded and asynchronous value iteration}\label{sec:prelimAlgo}
The well known technique of value iteration, e.g. \cite{Puterman,gameSurvey}, works by starting from an under-approximation of value function and then applying the Bellman equations. 
This converges towards the least fixpoint of the Bellman equations, i.e. the \emph{value function}.
Since it is difficult to give a convergence criterion, the approach of bounded value iteration (BVI, also called interval iteration) was developed for MDP \cite{BCC+14,HM17} and SG~\cite{KKKW18}.
Beside the under-approximation, it also updates an over-approximation according to the Bellman equations.
The most conservative over-approximation is to use an upper bound of 1 for every state.
For the under-approximation, we can set the lower bound of target states to 1; all other states have a lower bound of 0.
We use the function $\INITIALIZE$ in our algorithms to denote that the lower and upper bounds are set as just described; 
see Algorithm \ref{alg:init} in \arxivcite{Appendix \ref{app:init}}{\cite[Appendix A.1]{arxiv}} for the pseudocode.
Additionally, BVI ensures that the over-approximation converges to the least fixpoint by taking special care of \emph{end components}, which are the reason for not converging to the true value from above.

\begin{definition}[End component(EC)]
\label{def:EC}
A non-empty set $T\subseteq \states$ of states is an \emph{end component (EC)} if there is a non-empty set $B \subseteq \Union_{\state \in T} \Av(s)$ of actions such that 
(i) for each $\state \in T, \action \in B \intersection \Av(\state)$ we do \emph{not} have $(\state,\action) \leaves T$ and
(ii) for each $\state, \state' \in T$ there is a finite path $\fpath = \state \action<0> \dots \action<n> \state' \in (T \times B)^* \times T$, i.e. the path stays inside $T$ and only uses actions in $B$.
\end{definition}

Intuitively, ECs correspond to bottom strongly connected components of the Markov chains induced by possible strategies, so for some pair of strategies all possible paths starting in the EC remain there. 
An end component $T$ is a \emph{maximal end component (MEC)} if there is no other end component $T'$ such that $T \subseteq T'$.
Given an $\SG$ $\G$, the set of its MECs is denoted by $\mec(\G)$.

Note that, to stay in an EC in an SG, the two players would have to cooperate, since it depends on the pair of strategies.
To take into account the adversarial behaviour of the players, it is
also relevant to look at a subclass of ECs, the so called \emph{simple end components}, introduced in \cite{KKKW18}.
\begin{definition}[Simple end component (SEC) \cite{KKKW18}]
An EC $T$ is called \emph{simple}, if for all $\state \in T$ it holds that $\val(\state) = \exit(T,\val)$, where 
\[
\exit(T,f) :=  
\begin{cases}
1 &\mbox{if } T \cap \targetset \neq \emptyset\\
\max_{\substack{\state \in T \cap \states<\Box>\\ (\state,\action) \leaves T}} f(\state,\action) & \mbox{else}
\end{cases}
\] 
is called  the \emph{best exit} (of Maximizer) from $T$ according to the function $f: \states \to \Reals$.
To handle the case that there is no exit of Maximizer in $T$ we set $\max_\emptyset = 0$.
\end{definition}

Intuitively, SECs are ECs where Minimizer does not want to use any of her exits, as all of them have a greater value than the best exit of Maximizer.
Assigning any value between those of the best exits of Maximizer and Minimizer to all states in the EC is a solution to the Bellman equations, because both players prefer remaining and getting that value to using their exits \cite[Lemma 1]{KKKW18}. 
However, this is suboptimal for Maximizer, as the goal is not reached if the game remains in the EC forever.
Hence we ``deflate'' the upper bounds of SECs, i.e. reduce them to depend on the best exit of Maximizer.
$T$ is called maximal simple end component (MSEC), if there is no SEC $T'$ such that $T \subsetneq T'$.
Note that in MDPs, treating all MSECs amounts to treating all MECs.
%, since if the player is maximizing, he can choose to use the $\exit$, and if the player is minimizing, he can remain in every EC and achieve the value 0, which corresponds to $\exit$\textcolor{red}{, since there are no states of Maximizer and hence trivially no exits.
%The reason why SECs prevent the over-approximation from converging is, that there is a cyclic dependency between the states in it.
%Minimizer's best strategy is to remain in the SEC
%however, given an over-approximation Maximizer is under the illusion that remaining in the EC yields a higher value, since the 
%\textcolor{blue}{Intuitively, a SEC is an EC in the game with no suboptimal (according to $f$) actions of Minimizer, i.e. Minimizer's best strategy is to remain in this EC.
%Then all the values of states in the EC are equal to that of the $\exit$ of Maximizer.
%Staying in the SEC yields value 0 (except if a target was in the SEC), so the best maximizing strategy is to leave, achieving best exit.
%
%Note that in contrast to the definition of \cite{KKKW18}, the definition of SEC in this paper includes a check whether a target is in the EC, and if so, sets the value to 1. 
%This is necessary, because in \cite{KKKW18} the SG was preprocessed to only have a single target state that is a sink, while in the limited information setting the SG cannot be preprocessed. \todopranav{I think these lines referencing KKKW18 can be removed and somewhere say that we are adapting the definition of KKKW18.}
%}

\begin{algorithm}[htbp]
	\caption{Bounded value iteration algorithm for SG (and MDP)}\label{alg:general}
	\begin{algorithmic}[1]
		\Procedure{BVI}{SG $\G$, target set $\targetset$, precision $\epsilon>0$}
		\State $\INITIALIZE$
		
		\Repeat
		\State $X \gets \SIMULATE$ \textit{until} $\STUCK$ or state in $\targetset$ is hit
		    \label{line:relevantStates}
		\State $\UPDATE(X)$ ~~~~~~~~~~\Comment{Bellman updates or their modification}
		\For {$T \in \FIND(X)$}
		    \State $\DEFLATE(T)$ \Comment{Decrease the upper bound of MSECs}
		\EndFor
		\Until{$\ub(\initstate) - \lb(\initstate) < \epsilon$}
		\EndProcedure
	\end{algorithmic}
\end{algorithm}

Algorithm \ref{alg:general} rephrases that of \cite{KKKW18} and describes the general structure of all bounded value iteration algorithms that are relevant for this paper. 
We discuss it here since all our improvements refer to functions (in capitalized font) in it.
In the next section, we design new functions, pinpointing the difference to the other papers.
The pseudocode of the functions adapted from the other papers can be found, for the reader's convenience, in \arxivcite{Appendix \ref{app:algos}}{\cite[Appendix A]{arxiv}}. % contains the pseudocode for all the functions used.
Note that to improve readability, we omit the parameters $\G,\targetset,\lb$ and $\ub$ of the functions in the algorithm. 

\vspace{1ex}
\noindent\textbf{Bounded value iteration:} 
For the standard bounded value iteration algorithm, Line \ref{line:relevantStates} does not run a simulation, but just assigns the whole state space $\states$ to $X$.%
\footnote{Since we mainly talk about simulation based algorithms, we included this line to make their structure clearer.}
Then it updates all values according to the Bellman equations.
After that it finds all the problematic components, the MSECs, and ``deflates'' them as described in~\cite{KKKW18}, i.e. it reduces their values to ensure the convergence to the least fixpoint. 
This suffices for the bounds to converge and the algorithm to terminate~\cite[Theorem 2]{KKKW18}.

\vspace{1ex}
\noindent\textbf{Asynchronous bounded value iteration:}
To tackle the state space explosion problem, \emph{asynchronous} simulation/learning-based algorithms have been developed \cite{BRTDP,BCC+14,KKKW18}. 
The idea is not to update and deflate all states at once, since there might be too many, or since we only have limited information.
%The mentioned approaches are simulation/learning-based.
Instead of considering the whole state space, a path through the SG is sampled by picking in every state one of the actions that look optimal according to the current over-/under-approximation and then sampling a successor of that action. 
This is repeated until either a target is found, or until the simulation is looping in an EC; the latter case occurs if the heuristic that picks the actions generates a pair of strategies under which both players only pick staying actions in an EC. 
After the simulation, only the bounds of the states on the path are updated and deflated.
Since we pick actions which look optimal in the simulation, we almost surely find an $\epsilon$-optimal strategy and the algorithm terminates~\cite[Theorem 3]{BCC+14}. %\todo{"Reverse and add essentially"}

\section{Algorithm}

\subsection{Model-based}%: Counting occurrences}
Given only limited information, updating cannot be done using $\trans$, since the true probabilities are not known.
The approach of \cite{BCC+14} is to sample for a high number of steps and accumulate the observed lower and upper bounds on the true value function for each state-action pair.
When the number of samples is large enough, the average of the accumulator is used as the new estimate for the state-action pair, and thus the approximations can be improved and the results back-propagated, while giving statistical guarantees that each update was correct. 
However, this approach has several drawbacks, the biggest of which is that the number of steps before an update can occur is infeasibly large, often larger than the age of the universe, see Table \ref{tab:main-result} in Section \ref{sec:experiments}.

Our improvements to make the algorithm practically usable are linked to constructing a partial model of the given system.
That way, we have more information available on which we can base our estimates, and we can be less conservative when giving bounds on the possible errors.
The shift from model-free to model-based learning asymptotically increases the memory requirements from $\mathcal{O}(\abs{\states}\cdot \abs{\actions})$ (as in \cite{Strehl,BCC+14}) to $\mathcal{O}(\abs{\states}^2\cdot \abs{\actions})$. 
However, for systems where each action has a small constant bound on the number of successors, which is typical for many practical systems, e.g. classical PRISM benchmarks, it is still $\mathcal{O}(\abs{\states}\cdot \abs{\actions})$ with a negligible constant difference.

We thus track the number of times some successor $\tsucc$ has been observed when playing action $\action$ from state $\state$ in a variable $\#(\state,\action,\tsucc)$.
This implicitly induces the number of times each state-action pair $(\state,\action)$ has been played $\#(\state,\action)= \sum_{\tsucc \in \states} \#(\state,\action,\tsucc)$.
Given these numbers we can then calculate probability estimates for every transition as described in the next subsection.
They also induce the set of all states visited so far, allowing us to construct a partial model of the game.
See \arxivcite{Appendix \ref{app:simulate}}{\cite[Appendix A.2]{arxiv}} for the pseudo-code of how to count the occurrences during the~simulations.

\subsection{Safe updates with confidence intervals using distributed error probability}\label{sec:safeUpdate}

We use the counters to compute a lower estimate of the transition probability for some error tolerance $\transdelta$ as follows:
We view sampling $\tsucc$ from state-action pair $(\state,\action)$ as a Bernoulli sequence, with success probability $\trans(\state,\action,\tsucc)$, the number of trials $\#(\state,\action)$ and the number of successes $\#(\state,\action,\tsucc)$.
The tightest lower estimate we can give using the Hoeffding bound (see \arxivcite{Appendix \ref{app:hoeffding}}{\cite[Appendix D.1]{arxiv}}) is
\begin{equation}\label{eq:transEst}
\widehat\trans(\state,\action,\tsucc) \eqdef \max(0,\frac{\#(s,a,t)}{\#(s,a)} - c),
\end{equation}
where the confidence width $c \eqdef \sqrt{\frac{\ln(\transdelta)}{-2 \#(\state,\action)}}$.
Since $c$ could be greater than 1, we limit the lower estimate to be at least 0.
Now we can give modified update equations:

\begin{align*}
    \widehat\lb(\state,\action) &\eqdef \sum_{\tsucc: \#(\state,\action,\tsucc)>0} \widehat\trans(\state,\action,\tsucc) \cdot \lb(\tsucc)\\
    \widehat\ub(\state,\action) &\eqdef \left( \sum_{\tsucc: \#(\state,\action,\tsucc)>0} \widehat\trans(\state,\action,\tsucc) \cdot \ub(\tsucc) \right)
    +
    \left(1 - \sum_{\tsucc: \#(\state,\action,\tsucc)>0} \widehat\trans(\state,\action,\tsucc)\right)\label{eq:Uest}
\end{align*}

The idea is the same for both upper and lower bound: 
In contrast to the usual Bellman equation (see Section \ref{sec:prelimReach})
%or Algorithm \ref{alg:updateStandard}) 
we use $\widehat\trans$ instead of $\trans$. 
But since the sum of all the lower estimates does not add up to one, there is some remaining probability for which we need to under-/over-approximate the value it can achieve.
We use the safe approximations 0 and 1 for the lower and upper bound respectively; this is why in $\widehat\lb$ there is no second term and in $\widehat\ub$ the whole remaining probability is added.
%this is in fact the sum of all confidence widths of the transitions, or one minus the sum of all lower estimates. 
Algorithm \ref{alg:updateNew} shows the modified update that uses the lower estimates; the proof of its correctness is in \arxivcite{Appendix \ref{app:proofUpdate}}{\cite[Appendix D.2]{arxiv}} .

\begin{lemma}[$\UPDATE$ is correct]\label{lem:updateCorr}
Given correct under- and over-approximations $\lb,\ub$ of the value function $\val$,
and correct lower probability estimates $\widehat\trans$,
the under- and over-approximations after an application of $\UPDATE$ are also correct.
\end{lemma}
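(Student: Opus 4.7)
The plan is to verify correctness action-wise first (i.e.\ that $\widehat\lb(\state,\action) \leq \val(\state,\action) \leq \widehat\ub(\state,\action)$), and then argue that the final max/min aggregation at a state preserves these bounds.

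First I would unpack the two correctness hypotheses precisely: $\lb(\tsucc) \leq \val(\tsucc) \leq \ub(\tsucc)$ for every successor, $0 \leq \val(\tsucc) \leq 1$ (since $\val$ is a probability), and $0 \leq \widehat\trans(\state,\action,\tsucc) \leq \trans(\state,\action,\tsucc)$ (which is exactly what ``correct lower probability estimate'' means, and which follows from the Hoeffding-based construction of $\widehat\trans$ in~\eqref{eq:transEst}). In particular $\widehat p \eqdef \sum_{\tsucc} \widehat\trans(\state,\action,\tsucc) \leq \sum_{\tsucc} \trans(\state,\action,\tsucc) = 1$, so the ``leftover mass'' $1-\widehat p$ is nonnegative.

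For the lower bound I would chain the two monotonicities: using $\widehat\trans \leq \trans$ together with $\lb \geq 0$, and then $\lb \leq \val$, one obtains
\[
\widehat\lb(\state,\action) \;=\; \sum_{\tsucc} \widehat\trans(\state,\action,\tsucc)\,\lb(\tsucc) \;\leq\; \sum_{\tsucc} \trans(\state,\action,\tsucc)\,\lb(\tsucc) \;\leq\; \sum_{\tsucc} \trans(\state,\action,\tsucc)\,\val(\tsucc) \;=\; \val(\state,\action).
\]
For the upper bound, the key trick is to split the true transition mass as $\trans = \widehat\trans + (\trans - \widehat\trans)$ with the second summand nonnegative, and to bound the unknown part by $1$:
\[
\val(\state,\action) \;=\; \sum_{\tsucc} \widehat\trans(\state,\action,\tsucc)\,\val(\tsucc) + \sum_{\tsucc}\bigl(\trans(\state,\action,\tsucc)-\widehat\trans(\state,\action,\tsucc)\bigr)\,\val(\tsucc) \;\leq\; \sum_{\tsucc} \widehat\trans(\state,\action,\tsucc)\,\ub(\tsucc) + (1-\widehat p) \;=\; \widehat\ub(\state,\action),
\]
where I use $\val \leq \ub$ on the first sum and $\val \leq 1$ on the second.

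Finally I would lift this from state-action pairs to states. For a Maximizer state, $\val(\state) = \max_\action \val(\state,\action)$; since $\max$ is monotone, $\max_\action \widehat\lb(\state,\action) \leq \max_\action \val(\state,\action) = \val(\state)$ and analogously $\max_\action \widehat\ub(\state,\action) \geq \val(\state)$. The Minimizer case is symmetric with $\min$. Target and sink states are updated to the constants $1$ and $0$, which agree with $\val$ exactly. Since UPDATE only modifies states in $X$ (other bounds are unchanged and thus remain correct by hypothesis), the post-update $\lb,\ub$ are correct everywhere. I expect no genuine obstacle here; the only subtle point is remembering to use $\val \leq 1$ (not $\ub \leq 1$, which need not hold globally) to absorb the missing mass $1-\widehat p$ in the upper-bound estimate.
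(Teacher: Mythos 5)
Your proposal is correct and follows essentially the same route as the paper's proof: establish the bounds per state-action pair by splitting $\trans$ into $\widehat\trans$ plus the nonnegative remainder, absorb the remainder into the leftover mass $1-\widehat p$, and then lift to states via monotonicity of $\max/\min$. Your one deviation---bounding the remainder term with $\val(\tsucc)\leq 1$ rather than the paper's $\ub(\tsucc)\leq 1$---is a slight refinement, since $\ub\leq 1$ is an invariant of the algorithm rather than a stated hypothesis of the lemma; the only detail you gloss over that the paper spells out is that the sums over $\post(\state,\action)$ and over observed successors coincide because $\widehat\trans(\state,\action,\tsucc)=0$ whenever $\#(\state,\action,\tsucc)=0$.
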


\begin{algorithm}[htbp]
	\caption{New update procedure using the probability estimates}\label{alg:updateNew}
	\begin{algorithmic}[1]
		\Procedure{$\UPDATE$}{State set $X$}
		\For {$f \in \set{\lb,\ub}$}  ~~~~~~\Comment{For both functions}
    		\For {$\state \in X \setminus \targetset$} ~~~~\Comment{For all non-target states in the given set}
    		\State 
    		    $ f(\state) = 
    		    \begin{cases}
    		    \max_{\action \in \Av(\state)} \highlight{\widehat{f}}(\state,\action) &\mbox{if } \state \in \states<\Box>\\
    		    \min_{\action \in \Av(\state)} \highlight{\widehat{f}}(\state,\action) &\mbox{if } \state \in \states<\circ>
    		    \end{cases}
    		    $ \label{line:update} %~~~~~~\Comment{Apply Bellman update once}
    		\EndFor
    	\EndFor
		\EndProcedure
	\end{algorithmic}
\end{algorithm}

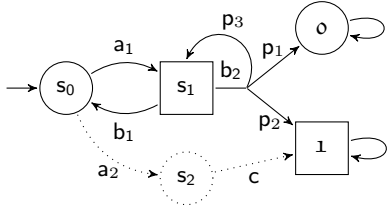
\begin{figure}[t]
\centering
\begin{tikzpicture}[scale=0.8]
%little BCEC with loop
\drawdummy (init) at (0,0) {};
\drawcirc (p) at (1,0) {$\mathsf{s_0}$};
\drawbox (q) at (3,0) {$\mathsf{s_1}$};
\drawdummy (mid) at (4,0) {};
\drawbox (1) at (5.25,-1) {$\target$ };
\drawcirc (0) at (5.25,1)  {$\sink$};
\drawcirc[black,dotted] (r) at (3, -1.5) {$\mathsf{s_2}$};

\draw[->] (init) to (p);
\draw[->]  (p) to[bend left] node [midway,anchor=south] {$\mathsf{a_1}$}(q) ;
\draw[->]  (q) to [bend left] node [midway,anchor=north] {$\mathsf{b_1}$} (p);
\draw[-] (q) to node [midway,anchor=south] {$\mathsf{b_2}$} (mid) ;
\draw[->] (mid) to node [pos=0.5,anchor=south,above] {$\mathsf p_1$} (0);
\draw[->] (mid) to node [pos=0.5,anchor=south,below] {$\mathsf p_2$} (1);
\draw[->] (mid) to [bend left,out=270,in=270,looseness=2] node[above] {$\mathsf p_3$} (q) ;
\draw[->]  (0) to [loop right]  node [midway,anchor=west] {} (0);
\draw[->]  (1) to [loop right] node [midway,anchor=west] {} (1) ;
\draw[->,dotted]  (p) to [bend right] node [midway,anchor=west,below] {$\mathsf a_2$} (r);
\draw[->,dotted]  (r) to node [midway,anchor=west,below] {$\mathsf{c}$} (1);

\end{tikzpicture}
\caption{A running example of an $\SG$.
The dashed part is only relevant for the later examples.
For actions with only one successor, we do not depict the transition probability $1$ (e.g. $\trans(\mathsf{s_0,a_1,s_1})$).
For state-action pair $(\mathsf{s_1,b_2})$, the transition probabilities are parameterized and instantiated in the examples where they are used.
}
	%$\target$ and $\bot$ each have one available action that surely leads back to themselves.}
\label{SGex}
\end{figure}

\begin{example}\label{ex:upd}
We illustrate how the calculation works and its huge advantage over the approach from \cite{BCC+14} on the SG from Figure \ref{SGex}.
For this example, ignore the dashed part and let $\mathsf{p_1}=\mathsf{p_2}=0.5$, i.e. we have no self loop, and an even chance to go to the target $\target$ or a sink $\sink$.
Observe that hence $\val(\mathsf{s_0}) = \val(\mathsf{s_1}) = 0.5$.

Given an error tolerance of $\delta=0.1$,
the algorithm of \cite{BCC+14} would have to sample for more than $10^{9}$ steps before it could attempt a single update.
In contrast, assume we have seen 5 samples of action $\mathsf b_2$, where 1 of them went to $\target$ and 4 of them to $\sink$.
Note that, in a sense, we were unlucky here, as the observed averages are very different from the actual distribution.
The confidence width for $\transdelta = 0.1$ and 5 samples is $\sqrt{\ln(0.1)/-2 \cdot 5} \approx 0.48$.
So given that data, we get $\widehat\trans(\mathsf{s_1,b_2},\target) = \max(0, 0.2 - 0.48) = 0$ and $\widehat\trans(\mathsf{s_1,b_2},\sink) = \max(0, 0.8 - 0.48) = 0.32$. 
Note that both probabilities are in fact lower estimates for their true counterpart.
%Furthermore, although we know that we can reach $\target$, the current estimate for the probability is 0, as there is no larger value that we can guarantee.\todo{in fact, we could use pmin, but we don't.}

Assume we already found out that $\sink$ is a sink with value 0; how we gain this knowledge is explained in the following subsections.
Then, after getting only these 5 samples, $\UPDATE$ already decreases the upper bound of $(\mathsf{s_1,b_2})$ to $0.68$, as we know that at least $0.32$ of $\trans(\mathsf{s_1,b_2})$ goes to the sink.

Given 500 samples of action $\mathsf{b_2}$, the confidence width of the probability estimates already has decreased below $0.05$.
Then, since we have this confidence width for both the upper and the lower bound, we can decrease the total precision for $(\mathsf{s_1,b_2})$ to $0.1$, i.e. return an interval in the order of $[0.45;0.55]$.
\qedtriangle
\end{example} 
Summing up: with the model-based approach we can already start updating after very few steps and get a reasonable level of confidence with a realistic number of samples. 
In contrast, the state-of-the-art approach of \cite{BCC+14} needs a very large number of samples even for this toy example.

Since for $\UPDATE$ we need an error tolerance for every transition, we need to distribute the given total error tolerance $\delta$ over all transitions in the current partial model.
For all states in the explored partial model $\Vis$ we know the number of available actions and can over-approximate the number of successors as $\frac 1 {p_{\min}}$.
Thus the error tolerance for each transition can be set to
$\transdelta \eqdef \frac{\delta \cdot {p_{\min}}}{\abs{\set{\action \mid \state \in \Vis \wedge \action \in \Av(\state)}}}$.
This is illustrated in Example \ref{ex:distrDelta} in \arxivcite{Appendix \ref{app:bigExample}}{\cite[Appendix B]{arxiv}}.

Note that the fact that the error tolerance $\transdelta$ for every transition is the same does \emph{not} imply that the confidence width for every transition is the same, as the latter becomes smaller with increasing number of samples $\#(\state,\action)$.

\subsection{Improved EC detection}
As mentioned in the description of Algorithm \ref{alg:general}, we must detect when the simulation is stuck in a bottom EC %\todo{discuss} 
and looping forever.
However, we may also stop simulations that are looping in some EC but still have a possibility to leave it; for a discussion of different heuristics from \cite{BCC+14,KKKW18}, see \arxivcite{Appendix \ref{app:stuck}}{\cite[Appendix A.3]{arxiv}}.

We choose to define $\STUCK$ as follows: Given a candidate for a bottom EC, we continue sampling until we are $\deltasure$ (i.e. the error probability is smaller than $\transdelta$) that we cannot leave it. 
Then we can safely deflate the EC, i.e. decrease all upper bounds~to~zero.
%\footnote{Note on a corner case: $\DEFLATE$ only decreases to zero if there is no $\targetset$ in the EC.}.

To detect that something is a $\deltasure$ EC, we do not sample for the astronomical number of steps as in \cite{BCC+14}, but rather extend the approach to detect bottom strongly connected components from \cite{DHKP16}.
If in the EC-candidate $T$ there was some state-action pair $(\state,\action)$ that actually has a probability to exit the $T$, that probability is at least $p_{\min}$.
So after sampling $(\state,\action)$ for $n$ times, the probability to overlook such a leaving transition is $(1-p_{\min})^n$ and it should be smaller than $\transdelta$. 
Solving the inequation for the required number of samples $n$ yields
$n \geq \frac{\ln(\transdelta)}{\ln(1-p_{min})}$.

Algorithm \ref{alg:sureEC} checks that we have seen all staying state-action pairs $n$ times, and hence that we are $\deltasure$ that $T$ is an EC.
Note that we restrict to staying state-action pairs, since the requirement for an EC is only that there exist staying actions, not that all actions stay.
We further speed up the EC-detection, because we do not wait for $n$ samples in every simulation, but we use the aggregated counters that are kept over all simulations.

\begin{algorithm}[htbp]
	\caption{Check whether we are $\deltasure$ that $T$ is an EC}\label{alg:sureEC}
	\begin{algorithmic}[1]
		\Procedure{$\highlight{\SUREEC}$}{State set $T$}
		\State $\stepsUntilSure = \frac{\ln(\transdelta)}{\ln(1-p_{\min})}$
		\State $B \gets \set{(\state,\action) \mid \state \in T \wedge \neg (\state,\action) \leaves T}$ \Comment{Set of staying state-action pairs}
		\State \textbf{return} $\bigwedge_{(\state,\action) \in B} \#(\state,\action) > \stepsUntilSure$ 
		\EndProcedure
	\end{algorithmic}
\end{algorithm}

We stop a simulation, if $\STUCK$ returns true, i.e. under the following three conditions:
(i) We have seen the current state before in this simulation ($\state \in X$), i.e. there is a cycle. (ii) This cycle is explainable by an EC $T$ in our current partial model. (iii) We are $\deltasure$ that $T$ is an EC.

\begin{algorithm}[htbp]
	\caption{Check if we are probably looping and should stop the~simulation}\label{alg:stuck}
	\begin{algorithmic}[1]
		\Procedure{$\STUCK$}{State set $X$, state $\state$}
		\If {$\state \notin X$}
		    \State \textbf{return false} \Comment{Easy improvement to avoid overhead}
		\EndIf
		\State \textbf{return } 
		$\highlight{\exists T \subseteq X. T \text{ is EC in partial model} \wedge \state \in T \wedge \SUREEC(T)}$
		\EndProcedure
	\end{algorithmic}
\end{algorithm}

\begin{example}\label{ex:looping}
For this example, we again use the SG from Figure \ref{SGex} without the dashed part, but this time with $\mathsf{p_1}=\mathsf{p_2}=\mathsf{p_3}=\frac 1 3$.
Assume the path we simulated is $(\mathsf{s_0,a_1,s_1,b_2,s_1})$, i.e. we sampled the self-loop of action $\mathsf b_2$.
Then $\set{\mathsf{s_1}}$ is a candidate for an EC, because given our current observation it seems possible that we will continue looping there forever.
However, we do not stop the simulation here, because we are not yet $\deltasure$ about this.
Given $\transdelta = 0.1$, the required samples for that are 6, since $\frac{\ln(0.1)}{\ln(1-\frac{1}{3})} = 5.6$.
With high probability (greater than $(1-\transdelta)=0.9$), within these 6 steps we will sample one of the other successors of $(\mathsf{s_1,b_2})$ and thus realise that we should not stop the simulation in $\mathsf{s_1}$.
If, on the other hand, we are in state $\sink$ or if in state $\mathsf{s_1}$ the guiding heuristic only picks $\mathsf{b_1}$, then we are in fact looping for more than 6 steps, and hence we stop the simulation.
\qedtriangle
\end{example}

\subsection{Adapting to games: Deflating MSECs}\label{sec:games}

To extend the algorithm of \cite{BCC+14} to SGs, instead of collapsing problematic ECs we deflate them as in \cite{KKKW18}, i.e. given an MSEC, we reduce the upper bound of all states in it to the upper bound of the $\exit$ of Maximizer. 
In contrast to \cite{KKKW18}, we cannot use the upper bound of the $\exit$ based on the true probability, but only based on our estimates.
Algorithm \ref{alg:deflate} shows how to deflate an MSEC and highlights the difference, namely that we use $\widehat\ub$ instead of~$\ub$.
\begin{algorithm}[htbp]
	\caption{Black box algorithm to deflate a set of states}\label{alg:deflate}
	\begin{algorithmic}[1]
		\Procedure{$\DEFLATE$}{State set $X$}
		\For {$\state \in X$}
		    \State $\ub(\state) = \min(\ub(\state),\exit(X,\highlight{\widehat\ub})$
		\EndFor
		\EndProcedure
	\end{algorithmic}
\end{algorithm}

The remaining question is how to find MSECs. 
The approach of \cite{KKKW18} is to find MSECs by removing the suboptimal actions of Minimizer according to the current lower bound.
Since it converges to the true value function, all MSECs are eventually found~\cite[Lemma 2]{KKKW18}.
Since Algorithm \ref{alg:find} can only access the SG as a black box, there are two differences:
We can only compare our estimates of the lower bound $\widehat\lb(\state,\action)$ to find out which actions are suboptimal.
Additionally there is the problem that we might overlook an exit from an EC, and hence deflate to some value that is too small; thus we need to check that any state set $\FIND$ returns is a $\deltasure$ EC.
This is illustrated in Example \ref{ex:find}.
For a bigger example of how all the functions we have defined work together, see Example \ref{ex:big} in \arxivcite{Appendix \ref{app:bigExample}}{\cite[Appendix B]{arxiv}}.

\begin{algorithm}[htbp]
	\caption{Finding MSECs in the game restricted to $X$ for black box setting}\label{alg:find}
	\begin{algorithmic}[1]
		\Procedure{$\FIND$}{State set $X$}
		\State $\textit{suboptAct}_\circ \gets \set{(\state,\set{\action \in \Av(\state) \mid \highlight{\widehat\lb}(\state,\action) > \lb(\state)} \mid \state \in \states<\circ> \cap X}$
		\State $\Av' \gets$ $\Av$ without $\textit{suboptAct}_\circ$
		\State $\G[\prime] \gets \G$ restricted to states $X$ and available actions $\Av'$
		\State \textbf{return} $\set{T \in \mec(\G[\prime]) \mid \highlight{\SUREEC(T)}}$
		\EndProcedure
	\end{algorithmic}
\end{algorithm}
%\textcolor{blue}{So in that setting it is only relevant to find the MSECs of the SG eventually in some simulation; when we find them, the correct state set is deflated to get convergence. 
%If we deflate other state sets during the computation, this does not impede correctness.
%Hence the approach of \cite{KKKW18} is to remove those actions of Minimizer that are suboptimal according to the current lower bound $\lb$ and compute the MECs in the remaining game. 
%Since $\lb$ converges to $\val$, the MSECs of the game are detected eventually \cite[Lemma 2]{KKKW18}.}

\begin{example}\label{ex:find}
For this example, we use the full SG from Figure \ref{SGex}, including the dashed part, with $\mathsf{p_1,p_2} > 0$.
Let $(\mathsf{s_0}, \mathsf{a_1}, \mathsf{s_1}, \mathsf{b_2}, \mathsf{s_2}, \mathsf{b_1},\mathsf{s_1}, \mathsf{a_2}, \mathsf{s_2}, \mathsf{c}, \target)$ be the path generated by our simulation.
Then in our partial view of the model, it seems as if $T=\set{\mathsf{s_0,s_1}}$ is an MSEC, since using $\mathsf a_2$ is suboptimal for the minimizing state $\mathsf{s_0}$\footnote{For $\transdelta=0.2$, sampling the path to target once suffices to realize that $\lb(\mathsf{s_0,a_2})>0$.} and according to our current knowledge $\mathsf{a_1,b_1}$ and $\mathsf b_2$ all stay inside $T$.
If we deflated $T$ now, all states would get an upper bound of 0, which would be incorrect.

Thus in Algorithm \ref{alg:find} we need to require that $T$ is an EC $\deltasure\textit{ly}$.
This was not satisfied in the example, as the state-action pairs have not been observed the required number of times.
Thus we do not deflate $T$, and our upper bounds stay correct.
Having seen $(\mathsf{s_1,b_2})$ the required number of times, we probably know that it is exiting $T$ and hence will not make the mistake.
\qedtriangle
\end{example}

\subsection{Guidance and statistical guarantee}\label{sec:full}
It is difficult to give statistical guarantees for the algorithm we have developed so far (i.e. Algorithm \ref{alg:general} calling the new functions from Section \ref{sec:safeUpdate} to \ref{sec:games}). 
Although we can bound the error of each function, applying them repeatedly can add up the error.
Algorithm \ref{alg:blackBVI} shows our approach to get statistical guarantees:
It interleaves a guided simulation phase (Lines \ref{line:beginSim}-\ref{line:endSim}) with a guaranteed standard bounded value iteration (called BVI phase) that uses our new functions (Lines \ref{line:beginBVI}-\ref{line:endBVI}).

The simulation phase builds the partial model by exploring states and remembering the counters.
In the first iteration of the main loop, it chooses actions randomly. 
In all further iterations, it is guided by the bounds that the last BVI phase computed.
After $\theN$ simulations (see below for a discussion of how to choose $\theN$), all the gathered information is used to compute one version of the partial model with probability estimates $\widehat\trans$ for a certain error tolerance $\deltaIter$.
We can continue with the assumption, that these probability estimates are correct, since it is only violated with a probability smaller than our error tolerance (see below for an explanation of the choice of $\deltaIter$).
So in our correct partial model, we re-initialize the lower and upper bound (Line \ref{line:reinit}), and execute a guaranteed standard BVI.
If the simulation phase already gathered enough data, i.e. explored the relevant states and sampled the relevant transitions often enough, this BVI achieves a precision smaller than $\varepsilon$ in the initial state, and the algorithm terminates.
Otherwise we start another simulation phase that is guided by the improved bounds.

%Otherwise we use the bounds calculated by BVI as guidance for the next simulation phase, since they are correct under our assumption that the partial model was correct.
%This way we ensure that the guidance of the simulation depends on sensible values and hence, it explores and samples relevant regions of the state space.
%Thus the next full BVI phase has more information and is more likely to achieve the required precision.

\begin{algorithm}[htbp]
	\caption{Full algorithm for black box  setting}\label{alg:blackBVI}
	\begin{algorithmic}[1]
		\Procedure{BlackVI}{SG $\G$, target set $\targetset$, precision $\varepsilon>0$, error tolerance $\delta>0$}
		\State $\INITIALIZE$
		\State $k = 1$ \Comment {guaranteed BVI counter}
		\State $\Vis \gets \emptyset$ \Comment{current partial model}
		
		\medskip
		
		\Repeat
	    \State $k \gets 2 \cdot k$
	    \State $\deltaIter \gets \frac \delta k$
		\medskip
		\Statex // Guided simulation phase \label{line:beginSim}
		\For {$\theN$ times}
		    \State $X \gets \SIMULATE$
    		\State $\Vis \gets \Vis \cup X$ \label{line:endSim}
    	\EndFor
    	
    	\medskip
    	\Statex // Guaranteed BVI phase 
	    \State $\transdelta \gets \frac{\deltaIter \cdot {p_{\min}}}{\abs{\set{\action \mid \state \in \Vis \wedge \action \in \Av(\state)}}}$ \Comment{Set $\transdelta$ as described in Section \ref{sec:safeUpdate}} \label{line:beginBVI}
		\State $\INITIALIZE$ \label{line:reinit}
		\For {$k \cdot \abs\Vis$ times}\label{line:BVItermcrit}
    	    \State $\UPDATE(\Vis)$ 
    		\For {$T \in \FIND(\Vis)$}
    		    \State $\DEFLATE(T)$\label{line:endBVI}
    		\EndFor
    	\EndFor
    	
    	\Until{$\ub(\initstate) - \lb(\initstate) < \varepsilon$}
		\EndProcedure
	\end{algorithmic}
\end{algorithm}

\paragraph{Choice of $\deltaIter$:}
For each of the full BVI phases, we construct a partial model that is correct with probability $(1-\deltaIter)$. 
%Since we do this repeatedly, the errors can add up.
To ensure that the sum of these errors is not larger than the specified error tolerance $\delta$, we use the variable $k$, which is initialised to 1 and doubled in every iteration of the main loop. Hence for the $i$-th BVI, $k = 2^{i}$.
By setting $\deltaIter = \frac{\delta}{k}$, we get that 
$\displaystyle \sum_{i=1}^{\infty} \deltaIter = \displaystyle \sum_{i=1}^{\infty} \frac{\delta}{2^{i}} = \delta$, and hence the error of all BVI phases does not exceed the specified error tolerance.

\vspace{1ex}
\noindent\textbf{When to stop each BVI-phase:}
The BVI phase might not converge if the probability estimates are not good enough.
We increase the number of iterations for each BVI depending on $k$, because that way we ensure that it eventually is allowed to run long enough to converge. 
On the other hand, since we always run for finitely many iterations, we also ensure that, if we do not have enough information yet, BVI is eventually stopped.
Other stopping criteria could return arbitrarily imprecise results \cite{HM17}.
We also multiply with $\abs{\Vis}$ to improve the chances of the early BVIs to converge, as that number of iterations ensures that every value has been propagated through the whole model at least once.
%When starting a BVI-phase, we do not know, whether we have enough information to achieve precision $\epsilon$; hence it is difficult to specify a termination criterion for the BVI-phase.
%If we stop when the difference between the two most recent approximations is low, we could  return arbitrarily imprecise results \cite{HM17}.
%Thus, by depending on $k$, we use an increasing number of iterations for each BVI, to guarantee that we eventually converge.
%We also multiply $\abs{\Vis}$, as some value might have to be propagated through the whole model.
%By enlarging the number of iterations like this, we can decrease the total number of BVIs.

\vspace{1ex}
\noindent\textbf{Discussion of the choice of $\theN$:}
The number of simulations between the guaranteed BVI phases can be chosen freely;
it can be a constant number every time, or any sequence of natural numbers, possibly parameterised by e.g. $k,$ $\abs\Vis$, $\varepsilon$ or any of the parameters of $\G$.
The design of particularly efficient choices or learning mechanisms that adjust them on the fly is an interesting task left for future work.
We conjecture the answer depends on the given SG and ``task'' that the user has for the algorithm: 
E.g. if one just needs a quick general estimate of the behaviour of the model, a smaller choice of $\theN$ is sensible; 
if on the other hand a definite precision $\varepsilon$ certainly needs to be achieved, a larger choice of $\theN$ is required.

\vspace{-1ex}
\begin{theorem}\label{thm:anyN}
For any choice of sequence for $\theN$, 
Algorithm \ref{alg:blackBVI} is an anytime algorithm with the following property:
When it is stopped, it returns an interval for $\val(\initstate)$ that is PAC \footnote{Probably Approximately Correct, i.e. with probability greater than $1-\delta$, the value lies in the returned interval of width $\varepsilon'$.} for the given error tolerance $\delta$ and some $\varepsilon'$, with $0 \leq \varepsilon' \leq 1$.
\end{theorem}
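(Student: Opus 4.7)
The plan is to decompose the statement into three claims: (i) the returned interval has width $\varepsilon'$ in $[0,1]$, (ii) the algorithm is anytime, and (iii) with probability at least $1-\delta$ the true value $\val(\initstate)$ lies in the returned interval. Claims (i) and (ii) follow from an invariant maintained throughout the run: a routine induction on the operations shows $0 \le \lb(\state) \le \ub(\state) \le 1$ for every $\state$. Indeed, $\INITIALIZE$ establishes it; $\UPDATE$ preserves it because $\widehat{f}(\state,\action)$ is a convex combination of values already in $[0,1]$ together with the safe fillers $0$ (for $\widehat\lb$) or $1$ (for $\widehat\ub$) for the missing probability mass; and $\DEFLATE$ only lowers $\ub$. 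Stopping at any point and returning $[\lb(\initstate),\ub(\initstate)]$ thus yields a valid interval of some width $\varepsilon'\in[0,1]$, and since both bounds persist across the outer loop the anytime property is immediate.

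For (iii), the core argument is a union bound over the BVI phases. Let $\mevent_i$ denote the event that, during the $i$-th BVI phase, (a) every probability estimate $\widehat\trans(\state,\action,\tsucc)$ computed at Line \ref{line:beginBVI} is a genuine under-approximation of $\trans(\state,\action,\tsucc)$, and (b) every set $T$ passed to $\DEFLATE$ really is an end component. Part (a) fails for a fixed transition with probability at most $\transdelta$ by the Hoeffding bound behind (\ref{eq:transEst}); since each of the $|\{\action\mid \state\in\Vis\wedge\action\in\Av(\state)\}|$ state-action pairs has at most $1/p_{\min}$ successors, the choice $\transdelta = \deltaIter\cdot p_{\min}/|\{\action\mid \state\in\Vis\wedge\action\in\Av(\state)\}|$ keeps the union bound from (a) under $\deltaIter/2$. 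Part (b) fails per candidate with probability at most $(1-p_{\min})^{\stepsUntilSure}\le \transdelta$ by the design of $\SUREEC$; a further union bound over the finitely many candidate sets tested by $\FIND$ keeps (b) under $\deltaIter/2$. Hence $\pr[\neg\mevent_i]\le \deltaIter = \delta/2^i$, and summing gives $\pr[\exists i:\neg\mevent_i]\le \sum_{i\ge 1}\delta/2^i = \delta$.

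Conditioned on $\bigcap_i\mevent_i$, I would show by induction on the phase that at every point $\lb(\state)\le \val(\state)\le \ub(\state)$ for all $\state$. Line \ref{line:reinit} restores trivially sound bounds at the start of each phase; $\UPDATE$ preserves soundness by Lemma \ref{lem:updateCorr}; and for $\DEFLATE$ I would adapt \cite[Lemma 1]{KKKW18}, using that once the $\widehat\trans$ are genuine under-approximations one has $\widehat\ub(\state,\action) = 1 - \sum_\tsucc\widehat\trans(\state,\action,\tsucc)(1-\ub(\tsucc)) \ge \ub(\state,\action)$, so $\exit(T,\widehat\ub)\ge \exit(T,\ub)\ge \val(\state)$ for $\state\in T$ whenever $T$ is an actual EC, which is exactly what $\mevent_i$ guarantees. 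Combining this invariant with the union bound above gives that with probability at least $1-\delta$ the returned interval contains $\val(\initstate)$, proving the PAC property.

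The most delicate step will be the statistical handling of $\DEFLATE$: $\FIND$ selects candidate ECs based on estimated lower bounds, so the error budget $\deltaIter$ per phase must simultaneously absorb the $\widehat\trans$-failures feeding $\UPDATE$ and the EC-certification failures of $\SUREEC$ across all candidates examined. Justifying the claim that the substitution of $\widehat\ub$ for $\ub$ in the deflation preserves the over-approximation, and verifying that the candidates inspected within a single phase are few enough for a single $\deltaIter$ budget to cover them, is the main bookkeeping obstacle; once this is cleanly arranged, the remaining steps reduce to Lemma \ref{lem:updateCorr} and standard BVI reasoning.
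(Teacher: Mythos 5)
Your overall architecture (a union bound over BVI phases to establish that all probability estimates are simultaneously sound, followed by a conditional induction showing that $\UPDATE$ and $\DEFLATE$ preserve $\lb \leq \val \leq \ub$) matches the paper's proof. The genuine divergence, and the gap, is in how you handle the event that $\FIND$/$\SUREEC$ certifies a set $T$ as a $\deltasure$ EC when $T$ in fact has an unobserved exit. You propose to give this event its own probability budget: split $\deltaIter$ into $\deltaIter/2$ for the Hoeffding failures of $\widehat\trans$ and $\deltaIter/2$ for EC-certification failures, the latter via a union bound ``over the finitely many candidate sets tested by $\FIND$.'' This does not go through as stated. First, the algorithm sets $\transdelta = \deltaIter \cdot p_{\min}/\abs{\set{\action \mid \state \in \Vis \wedge \action \in \Av(\state)}}$, so the union bound over transitions already consumes the \emph{entire} $\deltaIter$, leaving nothing for a second failure mode; your accounting would require changing the algorithm. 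Second, the union bound over candidates is exactly the step you defer as ``the main bookkeeping obstacle'': the number of candidate sets examined across all calls to $\FIND$ and $\STUCK$ within a phase is not bounded in terms of the quantities entering $\transdelta$, and the certification events for different candidates reuse the same aggregated counters, so they are neither independent nor disjoint. As written, the claim $\pr[\neg\mevent_i] \leq \deltaIter$ is not established.

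The paper closes this gap without any extra probability budget: it shows that a false EC certification is \emph{deterministically impossible} under Assumption 1 (all $\widehat\trans$ are genuine under-approximations). Concretely, if $(\state,\action)$ has an unobserved exit from $T$, then $\sum_{u \in T}\trans(\state,\action,u) \leq 1 - p_{\min}$, while having $\#(\state,\action) \geq \stepsUntilSure = \ln(\transdelta)/\ln(1-p_{\min})$ forces the confidence width $c$ to satisfy $p_{\min} > \abs{\post(\state,\action)}\cdot c$; since all observed successors lie in $T$, the empirical frequencies sum to $1$ and hence $\sum_{u\in T}\widehat\trans(\state,\action,u) \geq 1 - \abs{\post(\state,\action)}\cdot c > 1 - p_{\min} \geq \sum_{u\in T}\trans(\state,\action,u)$, so some $\widehat\trans$ must overestimate its true probability --- contradicting Assumption 1. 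Thus the EC-detection error is absorbed into the single Hoeffding union bound, and the only random event one needs is the one your part (a) covers (with the full budget $\deltaIter$). To repair your proof, either adopt this implication (which is the missing idea) or redefine $\transdelta$ to halve the budget \emph{and} supply an explicit, sample-reuse-aware bound on the number of certification events per phase; the former is both cleaner and consistent with the algorithm as given.
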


\vspace{-1ex}
Theorem \ref{thm:anyN} is the foundation of the practical usability of our algorithm. 
Given some time frame and some $\theN$, it calculates an approximation for $\val(\initstate)$ that is probably correct.
Note that the precision $\varepsilon'$ is independent of the input parameter $\varepsilon$, and could in the worst case be always 1.
However, practically it often is good (i.e. close to 0) as seen in the results in Section \ref{sec:experiments}.
Moreover, in our modified algorithm, we can also give a convergence guarantee as in \cite{BCC+14}.
Although mostly out of theoretical interest, in \arxivcite{Appendix~\ref{app:t2}}{\cite[Appendix D.4]{arxiv}} we design such a sequence $\theN$, too.
Since this a-priori sequence has to work in the worst case, it depends on an infeasibly large number of simulations.

\vspace{-1ex}
\begin{theorem}
There exists a choice of $\theN$, such that 
Algorithm \ref{alg:blackBVI} is PAC for any input parameters $\varepsilon, \delta$, i.e. it terminates almost surely 
and returns an interval for $\val(\initstate)$ of width smaller than $\varepsilon$ 
that is correct with probability at least $1-\delta$.
\end{theorem}

\subsection{Utilizing the additional information of grey box input}\label{sec:qual}
In this section, we consider the grey box setting,
i.e. for every state-action pair $(\state,\action)$ we additionally know the exact number of successors $\abs{\post(\state,\action)}$.
Then we can sample every state-action pair until we have seen all successors, and hence this information amounts to having qualitative information about the transitions, i.e. knowing where the transitions go, but not with which probability.

In that setting, we can improve the EC-detection and the estimated bounds in $\UPDATE$.
For EC-detection, note that the whole point of $\SUREEC$ is to check whether there are further transitions available; in grey box, we know this and need not depend on statistics.
%By the qualitative information, we immediately know if there are further transitions for a state-action pair and hence can speed up the process.
For the bounds, note that the equations for $\widehat\lb$ and $\widehat\ub$ both have two parts:
The usual Bellman part and 
the remaining probability multiplied with the most conservative guess of the bound, i.e. 0 and 1.
If we know all successors of a state-action pair, we do not have to be as conservative; then we can use $\min_{\tsucc \in \post(\state,\action)} \lb(\tsucc)$ respectively $\max_{\tsucc \in \post(\state,\action)} \ub(\tsucc)$.
Both these improvements have huge impact, as demonstrated in Section \ref{sec:experiments}.
However, of course, they also assume more knowledge about the model.

\section{Experimental evaluation}\label{sec:experiments}
\begin{table}[t]
\setlength{\tabcolsep}{8pt}
\centering
\caption{Achieved precision $\varepsilon'$ given by our algorithm in both grey and black box settings after running for a period of 30 minutes (See the paragraph below Theorem 1 for why we use $\varepsilon'$ and not $\varepsilon$). 
The first set of the models are MDPs and the second set are SGs. 
`-' indicates that the algorithm did not finish the first simulation phase and hence partial BVI was not called.
$m$ is the number of steps required by the DQL algorithm of \cite{BCC+14} before the first update. 
As this number is very large, we report only $log_{10}(m)$. For comparison, note that the age of the universe is approximately $10^{26}$ nanoseconds; logarithm of number of steps doable in this time is thus in the order of 26.
}
\label{tab:main-result}
\begin{tabular}{r@{\hskip 0.2in}rcllr} % change any of these to H to hide column
\toprule
\multirow[b]{2}{*}{Model} & \multirow[b]{2}{*}{States} & Explored \% & \multicolumn{2}{c}{Precision} & \multirow[b]{2}{*}{$log_{10}(m)$}\\
\cmidrule(lr){4-5}
  &  & Grey/Black & Grey & Black & \\\midrule
consensus          & 272    & 100/100 & 0.00945 & 0.171  & 338     \\
csma-2-2           & 1,038  & 93/93   & 0.00127 & 0.2851 & 1,888   \\
firewire           & 83,153 & 55/-    & 0.0057  & 1      & 129,430 \\
ij-3               & 7      & 100/100 & 0       & 0.0017 & 2,675   \\
ij-10              & 1,023  & 100/100 & 0       & 0.5407 & 17      \\
pacman             & 498    & 18/47   & 0.00058 & 0.0086 & 1,801   \\
philosophers-3     & 956    & 56/21   & 0       & 1      & 2,068   \\
pnueli-zuck-3      & 2,701  & 25/71   & 0       & 0.0285 & 5,844   \\
rabin-3            & 27,766 & 7/4     & 0       & 0.026  & 110,097 \\
wlan-0             & 2,954  & 100/100 & 0       & 0.8667 & 9,947   \\
zeroconf           & 670    & 29/27   & 0.00007 & 0.0586 & 5,998   \\ \midrule
cdmsn              & 1,240  & 100/98  & 0       & 0.8588 & 3,807   \\
cloud-5            & 8,842  & 49/20   & 0.00031 & 0.0487 & 71,484  \\
mdsm-1             & 62,245 & 69/-    & 0.09625 & 1      & 182,517 \\
mdsm-2             & 62,245 & 72/-    & 0.00055 & 1      & 182,517 \\
team-form-3        & 12,476 & 64/-    & 0       & 1      & 54,095  \\ \bottomrule
\end{tabular}
\end{table}

We implemented the approach as an extension of PRISM-Games \cite{PRISM-games}. 11 MDPs with reachability properties were selected from the Quantitative Verification Benchmark Set \cite{qcomp}.
Further, 4 stochastic games benchmarks from \cite{cloud,cdmsn,mdsm,teamform} were also selected.
We ran the experiments on a 40 core Intel Xeon server running at 2.20GHz per core and having 252 GB of RAM.
The tool however utilised only a single core and 1 GB of memory for the model checking.
Each benchmark was ran 10 times with a timeout of 30 minutes. 
We ran two versions of Algorithm \ref{alg:blackBVI}, one with the SG as a black box, the other as a grey box (see Definition \ref{def:limit}).
We chose $\theN = 10,000$ for all iterations.
The tool stopped either when a precision of $10^{-8}$ was obtained or after 30 minutes.
In total, 16 different model-property combinations were tried out.
The results of the experiment are reported in Table \ref{tab:main-result}.

In the black box setting, 
we obtained $\varepsilon < 0.1$ on 6 of the benchmarks. 
    5 benchmarks were `hard' and the algorithm did not improve the precision below 1.
    For 4 of them, it did not even finish the first simulation phase.
    If we decrease $\theN$, the BVI phase is entered, but still no progress is made.
    
In the grey box setting,
on 14 of 16 benchmarks, it took only 6 minutes to achieve $\varepsilon < 0.1$.
    For 8 these, the exact value was found within that time.
    Less than 50\% of the state space was explored in the case of \texttt{pacman}, \texttt{pneuli-zuck-3}, \texttt{rabin-3}, \texttt{zeroconf} and \texttt{cloud\_5}.
    A precision of $\varepsilon < 0.01$ was achieved on 15/16 benchmarks over a period of 30 minutes.

Figure \ref{fig:grey-v-black} shows the evolution of the lower and upper bounds in both the grey- and the black box settings for 4 different models. 
Graphs for the other models as well as more details on the results are in \arxivcite{Appendix \ref{app:experiments}.}{\cite[Appendix C]{arxiv}}.
%\textcolor{blue}{For the SMG \texttt{cloud\_5}, the bounds in the grey box setting converge quickly while in the black box setting, they become 0.1-close within 15 minutes, but don't make progress after that. 
%For the \texttt{pacman} MDP, the bounds converge quite soon in both the settings. 
%In the case of \texttt{rabin}, the grey box result is found in the first partial BVI (indicated by a yellow star) while the black box bounds converge slowly. 
%In the case of \texttt{csma}, the grey box bounds converge fast, but the black box bounds are still far apart.
%The SMG \texttt{mdsm-1} was the worst performer in the grey box setting obtaining an $\epsilon < 0.1$ only after 30-45 minutes. The black box algorithm is unable to make any progress.}

\begin{figure}[t]
    \centering
        \includegraphics[width=0.35\textwidth]{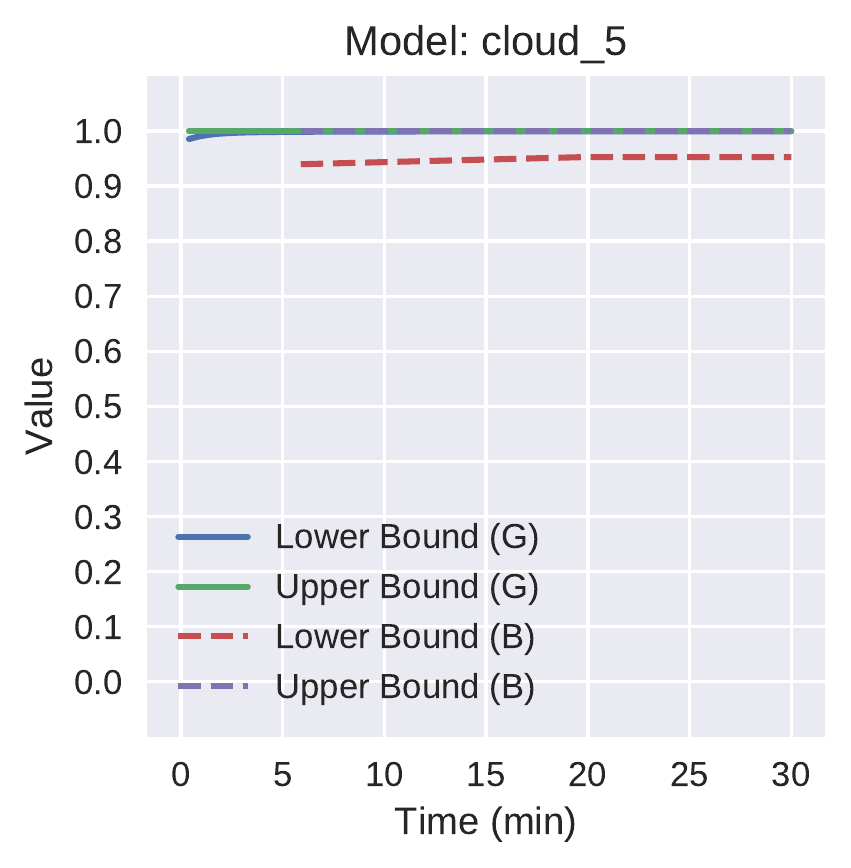}
		\includegraphics[width=0.35\textwidth]{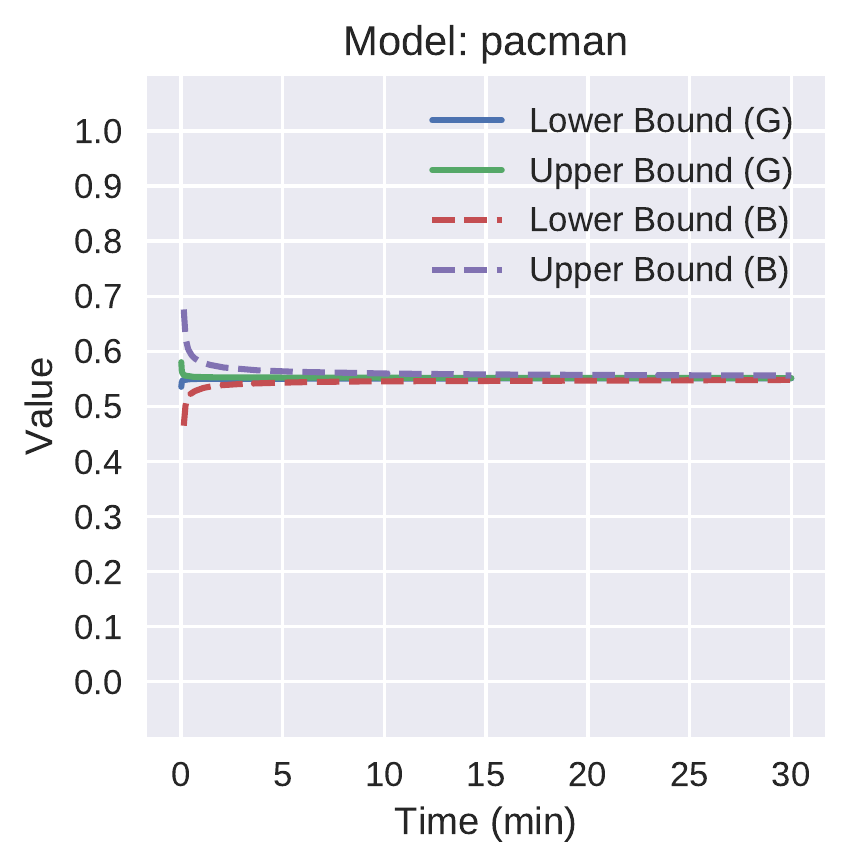}
        
		\includegraphics[width=0.35\textwidth]{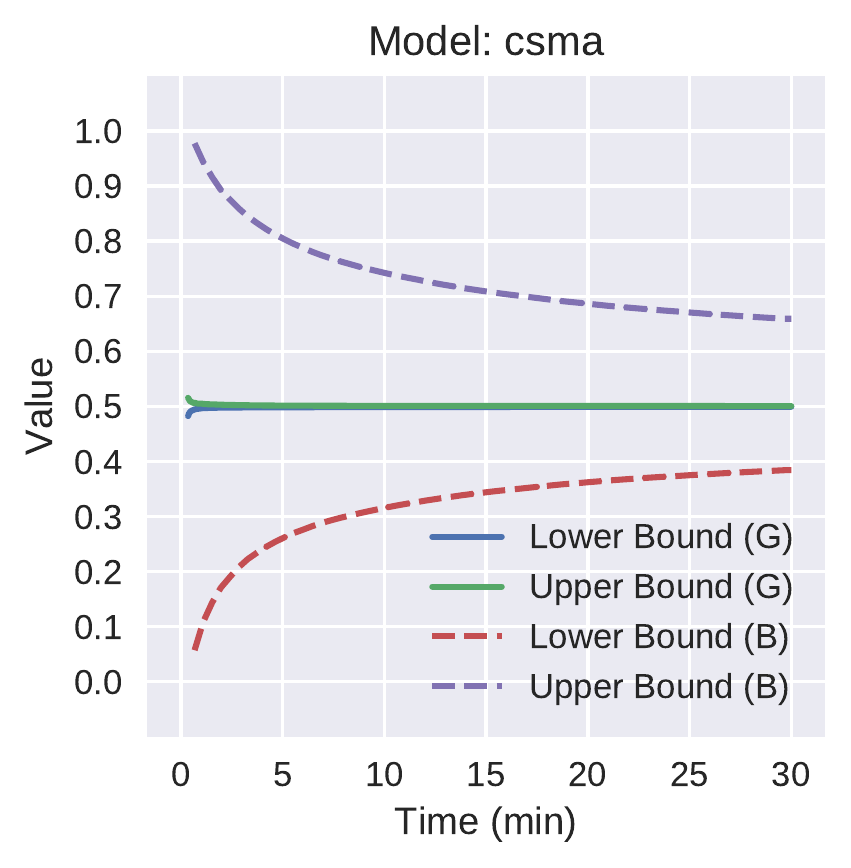}
        \includegraphics[width=0.35\textwidth]{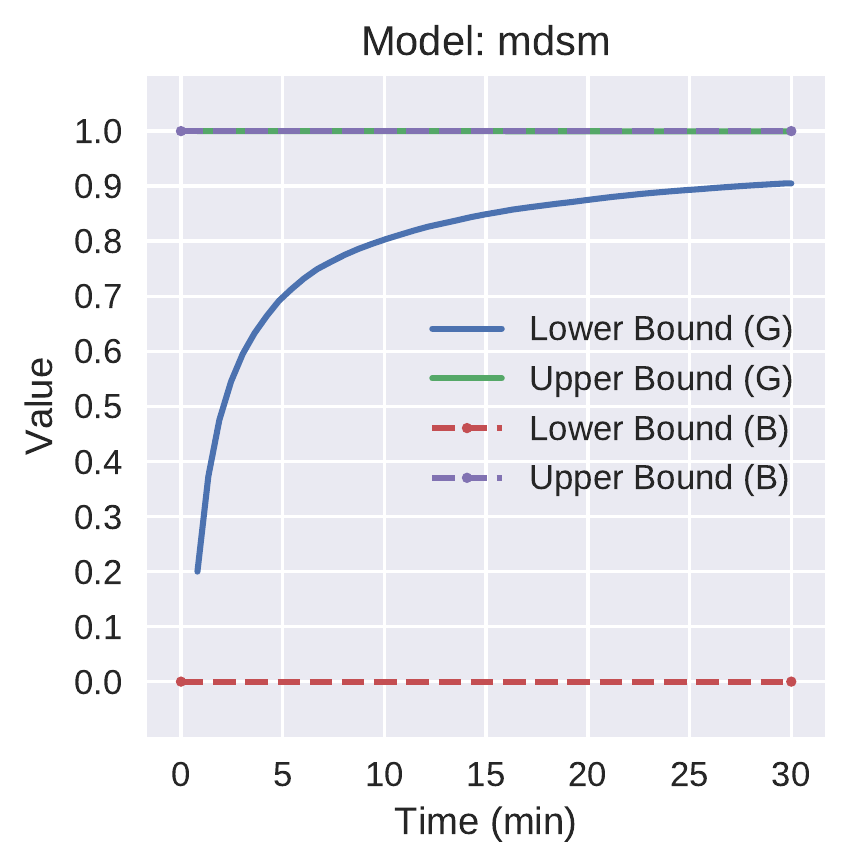}
    \caption{Performance of our algorithm on various MDP and SG benchmarks in grey and black box settings. Solid lines denote the bounds in the grey box setting while dashed lines denote the bounds in the black box setting. The plotted bounds are obtained after each partial BVI phase, because of which they do not start at $[0, 1]$ and not at time 0. Graphs of the remaining benchmarks may be found in \arxivcite{Appendix \ref{app:experiments}.}{\cite[Appendix C]{arxiv}.}}\label{fig:grey-v-black}
\end{figure}

\section{Conclusion}

We presented a PAC SMC algorithm for SG (and MDP) with the reachability objective.
It is the first one for SG and the first practically applicable one.
Nevertheless, there are several possible directions for further improvements.
For instance, one can consider different sequences for lengths of the simulation phases, possibly also dependent on the behaviour observed so far.
Further, the error tolerance could be distributed in a non-uniform way, allowing for fewer visits in rarely visited parts of end components. 
Since many systems are strongly connected, but at the same time feature some infrequent behaviour, this is the next bottleneck to be attacked.~\cite{cores}

%\section*{Acknowledgements}
%This research was funded in part by TUM IGSSE Grant 10.06 (PARSEC), the Czech Science Foundation grant No. 18-11193S, and the German Research Foundation (DFG) project KR 4890/2-1 ``Statistical Unbounded Verification''. We thank Florent Delgrange for his valuable feedback on the proof of Theorem 2.

\pagebreak
\bibliographystyle{alpha}
\bibliography{ref,survey}

\newcommand{\etalchar}[1]{$^{#1}$}
\begin{thebibliography}{CFK{\etalchar{+}}13b}

\bibitem[BBB{\etalchar{+}}10]{DBLP:conf/forte/BasuBBCDL10}
Ananda Basu, Saddek Bensalem, Marius Bozga, Beno\^{\i}t Caillaud, Beno\^{\i}t
  Delahaye, and Axel Legay.
\newblock Statistical abstraction and model-checking of large heterogeneous
  systems.
\newblock In {\em FMOODS/FORTE}, pages 32--46, 2010.

\bibitem[BCC{\etalchar{+}}14]{BCC+14}
Tom{\'{a}}s Br{\'{a}}zdil, Krishnendu Chatterjee, Martin Chmelik, Vojtech
  Forejt, Jan Kret{\'{\i}}nsk{\'{y}}, Marta~Z. Kwiatkowska, David Parker, and
  Mateusz Ujma.
\newblock Verification of markov decision processes using learning algorithms.
\newblock In {\em Automated Technology for Verification and Analysis - 12th
  International Symposium, {ATVA} 2014, Sydney, NSW, Australia, November 3-7,
  2014, Proceedings}, pages 98--114, 2014.

\bibitem[BCLS13]{DBLP:conf/qest/BoyerCLS13}
Beno\^{\i}t Boyer, Kevin Corre, Axel Legay, and Sean Sedwards.
\newblock {PLASMA-lab}: A flexible, distributable statistical model checking
  library.
\newblock In {\em QEST}, pages 160--164, 2013.

\bibitem[BDL{\etalchar{+}}12]{DBLP:journals/corr/abs-1207-1272}
Peter~E. Bulychev, Alexandre David, Kim~Guldstrand Larsen, Marius Mikucionis,
  Danny~B{\o}gsted Poulsen, Axel Legay, and Zheng Wang.
\newblock {UPPAAL-SMC}: Statistical model checking for priced timed automata.
\newblock In {\em QAPL}, 2012.

\bibitem[BFHH11]{DBLP:conf/forte/BogdollFHH11}
Jonathan Bogdoll, Luis Mar{\'i}a~Ferrer Fioriti, Arnd Hartmanns, and Holger
  Hermanns.
\newblock Partial order methods for statistical model checking and simulation.
\newblock In {\em FMOODS/FORTE}, pages 59--74, 2011.

\bibitem[BHH12]{DBLP:conf/mmb/BogdollHH12}
Jonathan Bogdoll, Arnd Hartmanns, and Holger Hermanns.
\newblock Simulation and statistical model checking for modestly
  nondeterministic models.
\newblock In {\em MMB/DFT}, pages 249--252, 2012.

\bibitem[BK08]{BaierBook}
Christel Baier and Joost-Pieter Katoen.
\newblock Principles of model checking, 2008.

\bibitem[BT99]{DBLP:conf/ijcai/BrafmanT99}
Ronen~I. Brafman and Moshe Tennenholtz.
\newblock A near-optimal poly-time algorithm for learning a class of stochastic
  games.
\newblock In {\em {IJCAI}}, pages 734--739, 1999.

\bibitem[CFK{\etalchar{+}}13a]{PRISM-games}
T.~Chen, V.~Forejt, M.~Kwiatkowska, D.~Parker, and A.~Simaitis.
\newblock {PRISM-games}: A model checker for stochastic multi-player games.
\newblock In N.~Piterman and S.~Smolka, editors, {\em Proc. 19th International
  Conference on Tools and Algorithms for the Construction and Analysis of
  Systems (TACAS'13)}, volume 7795 of {\em LNCS}, pages 185--191. Springer,
  2013.

\bibitem[CFK{\etalchar{+}}13b]{mdsm}
Taolue Chen, Vojt{\v{e}}ch Forejt, Marta Kwiatkowska, David Parker, and Aistis
  Simaitis.
\newblock Automatic verification of competitive stochastic systems.
\newblock {\em Formal Methods in System Design}, 43(1):61--92, Aug 2013.

\bibitem[CH08]{visurvey}
Krishnendu Chatterjee and Thomas~A Henzinger.
\newblock Value iteration.
\newblock In {\em 25 Years of Model Checking}, pages 107--138. Springer, 2008.

\bibitem[CH12]{gameSurveyKrish}
Krishnendu Chatterjee and Thomas~A. Henzinger.
\newblock A survey of stochastic {\(\omega\)}-regular games.
\newblock {\em J. Comput. Syst. Sci.}, 78(2):394--413, 2012.

\bibitem[CKJ12]{cloud}
Radu Calinescu, Shinji Kikuchi, and Kenneth Johnson.
\newblock {\em Compositional Reverification of Probabilistic Safety Properties
  for Large-Scale Complex IT Systems}, pages 303--329.
\newblock Springer Berlin Heidelberg, Berlin, Heidelberg, 2012.

\bibitem[CKPS11]{teamform}
Taolue Chen, Marta Kwiatkowska, David Parker, and Aistis Simaitis.
\newblock {\em Verifying Team Formation Protocols with Probabilistic Model
  Checking}, pages 190--207.
\newblock Springer Berlin Heidelberg, Berlin, Heidelberg, 2011.

\bibitem[Con90]{condonAlgo}
Anne Condon.
\newblock On algorithms for simple stochastic games.
\newblock In {\em Advances In Computational Complexity Theory}, volume~13 of
  {\em {DIMACS} Series in Discrete Mathematics and Theoretical Computer
  Science}, pages 51--71. {DIMACS/AMS}, 1990.

\bibitem[Con92]{condonComplexity}
Anne Condon.
\newblock The complexity of stochastic games.
\newblock {\em Information and Computation}, 96(2):203--224, 1992.

\bibitem[CZ11]{DBLP:conf/atva/ClarkeZ11}
Edmund~M. Clarke and Paolo Zuliani.
\newblock Statistical model checking for cyber-physical systems.
\newblock In {\em ATVA}, pages 1--12, 2011.

\bibitem[DDL{\etalchar{+}}12]{DBLP:journals/corr/abs-1208-3856}
Alexandre David, Dehui Du, Kim~G. Larsen, Axel Legay, Marius Mikucionis,
  Danny~B{\o}gsted Poulsen, and Sean Sedwards.
\newblock Statistical model checking for stochastic hybrid systems.
\newblock In {\em HSB}, pages 122--136, 2012.

\bibitem[DDL{\etalchar{+}}13]{DBLP:conf/nfm/DavidDLLM13}
Alexandre David, Dehui Du, Kim~Guldstrand Larsen, Axel Legay, and Marius
  Mikucionis.
\newblock Optimizing control strategy using statistical model checking.
\newblock In {\em NASA Formal Methods}, pages 352--367, 2013.

\bibitem[DHKP16]{DHKP16}
Przemyslaw Daca, Thomas~A. Henzinger, Jan Kret{\'{\i}}nsk{\'{y}}, and Tatjana
  Petrov.
\newblock Faster statistical model checking for unbounded temporal properties.
\newblock In {\em {TACAS} 2016}, pages 112--129, 2016.

\bibitem[DHS18]{DBLP:conf/isola/DArgenioHS18}
Pedro~R. D'Argenio, Arnd Hartmanns, and Sean Sedwards.
\newblock Lightweight statistical model checking in nondeterministic continuous
  time.
\newblock In {\em ISoLA {(2)}}, volume 11245 of {\em Lecture Notes in Computer
  Science}, pages 336--353. Springer, 2018.

\bibitem[DLL{\etalchar{+}}11a]{DBLP:conf/formats/DavidLLMPVW11}
Alexandre David, Kim~G. Larsen, Axel Legay, Marius Mikucionis, Danny~B{\o}gsted
  Poulsen, Jonas van Vliet, and Zheng Wang.
\newblock Statistical model checking for networks of priced timed automata.
\newblock In {\em FORMATS}, 2011.

\bibitem[DLL{\etalchar{+}}11b]{DBLP:conf/cav/DavidLLMW11}
Alexandre David, Kim~G. Larsen, Axel Legay, Marius Mikucionis, and Zheng Wang.
\newblock Time for statistical model checking of real-time systems.
\newblock In {\em CAV}, pages 349--355, 2011.

\bibitem[DLST15]{DBLP:journals/sttt/DArgenioLST15}
Pedro D'Argenio, Axel Legay, Sean Sedwards, and Louis{-}Marie Traonouez.
\newblock Smart sampling for lightweight verification of markov decision
  processes.
\newblock {\em {STTT}}, 17(4):469--484, 2015.

\bibitem[EGF12]{DBLP:conf/formats/EllenGF12}
Christian Ellen, Sebastian Gerwinn, and Martin Fr{\"a}nzle.
\newblock Confidence bounds for statistical model checking of probabilistic
  hybrid systems.
\newblock In {\em FORMATS}, pages 123--138, 2012.

\bibitem[FT14]{DBLP:conf/rss/FuT14}
Jie Fu and Ufuk Topcu.
\newblock Probably approximately correct {MDP} learning and control with
  temporal logic constraints.
\newblock In {\em Robotics: Science and Systems}, 2014.

\bibitem[HAK18]{DBLP:journals/corr/abs-1801-08099}
Mohammadhosein Hasanbeig, Alessandro Abate, and Daniel Kroening.
\newblock Logically-correct reinforcement learning.
\newblock {\em CoRR}, abs/1801.08099, 2018.

\bibitem[HAK19]{DBLP:journals/corr/abs-1902-00778}
Mohammadhosein Hasanbeig, Alessandro Abate, and Daniel Kroening.
\newblock Certified reinforcement learning with logic guidance.
\newblock {\em CoRR}, abs/1902.00778, 2019.

\bibitem[HJB{\etalchar{+}}10]{ase10}
Ru~He, Paul Jennings, Samik Basu, Arka~P. Ghosh, and Huaiqing Wu.
\newblock A bounded statistical approach for model checking of unbounded until
  properties.
\newblock In {\em {ASE}}, pages 225--234, 2010.

\bibitem[HKP{\etalchar{+}}19]{qcomp}
Arnd Hartmanns, Michaela Klauck, David Parker, Tim Quatmann, and Enno Ruijters.
\newblock The quantitative verification benchmark set.
\newblock In {\em TACAS 2019 (to appear)}, 2019.

\bibitem[HLMP04]{vmcai04}
Thomas H{\'{e}}rault, Richard Lassaigne, Fr{\'{e}}d{\'{e}}ric Magniette, and
  Sylvain Peyronnet.
\newblock Approximate probabilistic model checking.
\newblock In {\em {VMCAI}}, pages 73--84, 2004.

\bibitem[HM17]{HM17}
Serge Haddad and Benjamin Monmege.
\newblock Interval iteration algorithm for mdps and imdps.
\newblock {\em Theoretical Computer Science}, 2017.

\bibitem[HMZ{\etalchar{+}}12]{DBLP:conf/qest/HenriquesMZPC12}
David Henriques, Jo{\~a}o Martins, Paolo Zuliani, Andr{\'e} Platzer, and
  Edmund~M. Clarke.
\newblock Statistical model checking for {Markov} decision processes.
\newblock In {\em QEST}, pages 84--93, 2012.

\bibitem[Hoe63]{hoeffding}
Wassily Hoeffding.
\newblock Probability inequalities for sums of bounded random variables.
\newblock {\em Journal of the American statistical association},
  58(301):13--30, 1963.

\bibitem[HPS{\etalchar{+}}19]{DBLP:conf/tacas/HahnPSSTW19}
Ernst~Moritz Hahn, Mateo Perez, Sven Schewe, Fabio Somenzi, Ashutosh Trivedi,
  and Dominik Wojtczak.
\newblock Omega-regular objectives in model-free reinforcement learning.
\newblock In {\em {TACAS} {(1)}}, pages 395--412, 2019.

\bibitem[JCL{\etalchar{+}}09]{DBLP:conf/cmsb/JhaCLLPZ09}
Sumit~Kumar Jha, Edmund~M. Clarke, Christopher~James Langmead, Axel Legay,
  Andr{\'e} Platzer, and Paolo Zuliani.
\newblock A bayesian approach to model checking biological systems.
\newblock In {\em CMSB}, pages 218--234, 2009.

\bibitem[JLS12]{DBLP:conf/tacas/JegourelLS12}
Cyrille J{\'{e}}gourel, Axel Legay, and Sean Sedwards.
\newblock A platform for high performance statistical model checking -
  {PLASMA}.
\newblock In {\em TACAS}, pages 498--503, 2012.

\bibitem[KKKW18]{KKKW18}
Edon Kelmendi, Julia Kr{\"{a}}mer, Jan Kret{\'{\i}}nsk{\'{y}}, and Maximilian
  Weininger.
\newblock Value iteration for simple stochastic games: Stopping criterion and
  learning algorithm.
\newblock In {\em Computer Aided Verification - 30th International Conference,
  {CAV} 2018, Held as Part of the Federated Logic Conference, FloC 2018,
  Oxford, UK, July 14-17, 2018, Proceedings, Part {I}}, pages 623--642, 2018.

\bibitem[KM19]{cores}
Jan K{\v{r}}et{\'\i}nsk{\'y} and Tobias Meggendorfer.
\newblock Of cores: A partial-exploration framework for {M}arkov decision
  processes.
\newblock {\em Submitted}, 2019.

\bibitem[KNP11]{prism}
Marta~Z. Kwiatkowska, Gethin Norman, and David Parker.
\newblock Prism 4.0: Verification of probabilistic real-time systems.
\newblock In {\em CAV}, pages 585--591, 2011.

\bibitem[Lar12]{DBLP:conf/formats/Larsen12}
Kim~G. Larsen.
\newblock Statistical model checking, refinement checking, optimization, ...
  for stochastic hybrid systems.
\newblock In {\em FORMATS}, pages 7--10, 2012.

\bibitem[Lar13]{DBLP:conf/ifm/Larsen13}
Kim~Guldstrand Larsen.
\newblock Priced timed automata and statistical model checking.
\newblock In {\em IFM}, 2013.

\bibitem[Lit94]{DBLP:conf/icml/Littman94}
Michael~L. Littman.
\newblock Markov games as a framework for multi-agent reinforcement learning.
\newblock In {\em {ICML}}, pages 157--163, 1994.

\bibitem[LN81]{DBLP:journals/mor/LakshmivarahanN81}
S.~Lakshmivarahan and Kumpati~S. Narendra.
\newblock Learning algorithms for two-person zero-sum stochastic games with
  incomplete information.
\newblock {\em Math. Oper. Res.}, 6(3):379--386, 1981.

\bibitem[LP08]{DBLP:journals/apal/LassaigneP08}
Richard Lassaigne and Sylvain Peyronnet.
\newblock Probabilistic verification and approximation.
\newblock {\em Ann. Pure Appl. Logic}, 152(1-3):122--131, 2008.

\bibitem[LP12]{LP12}
Richard Lassaigne and Sylvain Peyronnet.
\newblock Approximate planning and verification for large {Markov} decision
  processes.
\newblock In {\em SAC}, pages 1314--1319, 2012.

\bibitem[LST14]{DBLP:conf/sefm/LegayST14}
Axel Legay, Sean Sedwards, and Louis{-}Marie Traonouez.
\newblock Scalable verification of markov decision processes.
\newblock In {\em SEFM}, pages 350--362, 2014.

\bibitem[Mar75]{leastReadablePaperJanEverRead}
Donald~A Martin.
\newblock Borel determinacy.
\newblock {\em Annals of Mathematics}, pages 363--371, 1975.

\bibitem[MLG05]{BRTDP}
H.~Brendan Mcmahan, Maxim Likhachev, and Geoffrey~J. Gordon.
\newblock Bounded real-time dynamic programming: Rtdp with monotone upper
  bounds and performance guarantees.
\newblock In {\em In ICML’05}, pages 569--576, 2005.

\bibitem[Nor98]{norris1998markov}
James~R Norris.
\newblock {\em Markov chains}.
\newblock Cambridge university press, 1998.

\bibitem[PGL{\etalchar{+}}13]{DBLP:conf/cmsb/PalaniappanG0HT13}
Sucheendra~K. Palaniappan, Benjamin~M. Gyori, Bing Liu, David Hsu, and P.~S.
  Thiagarajan.
\newblock Statistical model checking based calibration and analysis of
  bio-pathway models.
\newblock In {\em CMSB}, pages 120--134, 2013.

\bibitem[Put14]{Puterman}
Martin~L. Puterman.
\newblock {\em Markov decision processes: Discrete stochastic dynamic
  programming}.
\newblock John Wiley \& Sons, 2014.

\bibitem[RF91]{gameSurvey}
T.~E.~S. Raghavan and J.~A. Filar.
\newblock Algorithms for stochastic games --- a survey.
\newblock {\em Zeitschrift f{\"u}r Operations Research}, 35(6):437--472, Nov
  1991.

\bibitem[RP09]{atva09}
Diana~El Rabih and Nihal Pekergin.
\newblock Statistical model checking using perfect simulation.
\newblock In {\em {ATVA}}, pages 120--134, 2009.

\bibitem[SB98]{SB98}
R.~Sutton and A.~Barto.
\newblock {\em Reinforcement Learning: An Introduction}.
\newblock MIT Press, 1998.

\bibitem[SKC{\etalchar{+}}14]{DBLP:conf/cdc/SadighKCSS14}
Dorsa Sadigh, Eric~S. Kim, Samuel Coogan, S.~Shankar Sastry, and Sanjit~A.
  Seshia.
\newblock A learning based approach to control synthesis of markov decision
  processes for linear temporal logic specifications.
\newblock In {\em {CDC}}, pages 1091--1096, 2014.

\bibitem[SLW{\etalchar{+}}06]{Strehl}
Alexander~L. Strehl, Lihong Li, Eric Wiewiora, John Langford, and Michael~L.
  Littman.
\newblock {PAC} model-free reinforcement learning.
\newblock In {\em ICML}, pages 881--888, 2006.

\bibitem[SS12]{cdmsn}
Fabrice Saffre and Aistis Simaitis.
\newblock Host selection through collective decision.
\newblock {\em ACM Trans. Auton. Adapt. Syst.}, 7(1):4:1--4:16, May 2012.

\bibitem[SVA04]{Sen04}
Koushik Sen, Mahesh Viswanathan, and Gul Agha.
\newblock Statistical model checking of black-box probabilistic systems.
\newblock In {\em CAV}, pages 202--215, 2004.

\bibitem[SVA05]{cav05}
Koushik Sen, Mahesh Viswanathan, and Gul Agha.
\newblock On statistical model checking of stochastic systems.
\newblock In {\em CAV}, pages 266--280, 2005.

\bibitem[WT16]{DBLP:conf/ijcai/WenT16}
Min Wen and Ufuk Topcu.
\newblock Probably approximately correct learning in stochastic games with
  temporal logic specifications.
\newblock In {\em {IJCAI}}, pages 3630--3636, 2016.

\bibitem[YCZ10]{sbmf11}
H{\aa}kan L.~S. Younes, Edmund~M. Clarke, and Paolo Zuliani.
\newblock Statistical verification of probabilistic properties with unbounded
  until.
\newblock In {\em SBMF}, pages 144--160, 2010.

\bibitem[YKNP06]{DBLP:journals/sttt/YounesKNP06}
H{\aa}kan L.~S. Younes, Marta~Z. Kwiatkowska, Gethin Norman, and David Parker.
\newblock Numerical vs. statistical probabilistic model checking.
\newblock {\em {STTT}}, 8(3):216--228, 2006.

\bibitem[YS02a]{YS02}
H.~Younes and R.~Simmons.
\newblock Probabilistic verification of discrete event systems using acceptance
  sampling.
\newblock In {\em CAV}, pages 223--235, 2002.

\bibitem[YS02b]{Younes02}
H{\aa}kan L.~S. Younes and Reid~G. Simmons.
\newblock Probabilistic verification of discrete event systems using acceptance
  sampling.
\newblock In {\em CAV}, pages 223--235. Springer, 2002.

\bibitem[ZPC10]{DBLP:conf/hybrid/ZulianiPC10}
Paolo Zuliani, Andr{\'e} Platzer, and Edmund~M. Clarke.
\newblock Bayesian statistical model checking with application to
  simulink/stateflow verification.
\newblock In {\em HSCC}, pages 243--252, 2010.

\end{thebibliography}

\iftoggle{arxiv}{
\appendix
\section*{Appendix}
\section{Pseudocode for the standard algorithms}\label{app:algos}

\subsection{$\INITIALIZE$} \label{app:init}
Initializing the bounds as described in Section \ref{sec:prelimAlgo}.
When simulating, we do not explicitly initialize the bounds for all states up front, but we rather use these values as soon as we encounter a new state.
\begin{algorithm}[htbp]
	\caption{Initializing the lower and upper bounds on the value in the most conservative way}\label{alg:init}
	\begin{algorithmic}[1]
		\Procedure{$\INITIALIZE$}{}
		\For {$\state \in \states$}
		\State $\lb(\state)=
    		\begin{cases}
    		1 &\mbox{if } \state \in \targetset\\
    		0 &\mbox{else}
    		\end{cases}$ ~~~~~~\Comment{Lower bound}
		\State $\ub(\state)=1$ ~~~~~~~~~~~~~~~~\Comment{Upper bound}
		\EndFor
		\EndProcedure
	\end{algorithmic}
\end{algorithm}

\subsection{$\SIMULATE$}\label{app:simulate}

\para{In the full information setting,}
$\SIMULATE$ works as follows:
The set of $\best$ actions for a state $\state$, given the the current $\ub$ and $\lb$, is defined as:
\[\best_{\ub,\lb}(\state) := 
\begin{cases} \set{\action \in \Av(\state) \mid \ub(\state,\action) = \max_{\action \in \Av(\state)}\ub(\state,\action)}		&\mbox{if } \state \in \states<\Box>  \\
\set{\action \in \Av(\state) \mid \lb(\state,\action) =\min_{\action \in \Av(\state)}\lb(\state,\action)} &\mbox{if } \state \in \states<\circ>. \end{cases}\]
So $\SIMULATE$ can be viewed as 
first fixing the strategies $\sigma,\tau$, such that for every state $\state$ they randomize uniformly over all actions in $\best_{\ub,\lb}(\state)$, 
and then sampling in the induced Markov chain $\G[\sigma,\tau]$.

\begin{algorithm}[H]
\caption{Standard simulation algorithm to sample a path}\label{alg:simulate}
\begin{algorithmic}[1]
\Procedure{$\SIMULATE$}{lower bound function $\lb$, upper bound function $\ub$}
    \State $X \gets \emptyset$ \Comment{Set of states visited during this simulation}
    \State $\state \gets \initstate$ 
    \Repeat
    	\State $X \gets X \cup \state$
    	\State $\action \gets $ sampled uniformly from $\best_{\ub,\lb}(\state)$
    	\State $\state \gets$ sampled according to $\trans(\state,\action)$
    \Until{$\state \in \targetset$ or $\STUCK(\state,X)$}
    \State \textbf{return} $X \cup \set{\state}$ \Comment{Add last state before returning path.}
\EndProcedure
\end{algorithmic}
\end{algorithm}

\para{In the limited information setting,} we use a map $\states \times \actions \times \states \to \mathbb{N}$ to store the number of times a triple $(\state,\action,\tsucc)$ has been observed. 
Initially, all counters are set to 0.

Then the only difference between Algorithm \ref{alg:simulate} and \ref{alg:simulateNew} is the highlighted line, where we increment the counter of the observed triple, and the fact that we explicitly save the successor state in a variable to be able to do so.

\begin{algorithm}[H]
\caption{New simulation algorithm counting occurrences}\label{alg:simulateNew}
\begin{algorithmic}[1]
\Procedure{$\SIMULATE_{counting}$}{}
    \State $X \gets \emptyset$ \Comment{Set of states visited during this simulation}
    \State $\state \gets \initstate$ 
    \Repeat
    	\State $X \gets X \cup \state$
    	\State $\action \gets $ sampled uniformly from $\best_{\ub,\lb}(\state)$
    	\State $\highlight{\tsucc} \gets$ sampled according to $\trans(\state,\action)$
    	\State $\highlight{\text{Increment }\#(\state,\action,\tsucc)}$
    	\State $\state \gets \highlight{\tsucc}$
    \Until{$\state \in \targetset$ or $\STUCK_{safe}(\state,X)$}
    \State \textbf{return} $X \cup \set{\state}$ 
\EndProcedure
\end{algorithmic}
\end{algorithm}

\subsection{$\STUCK$}\label{app:stuck}

There are several heuristics for $\STUCK$ described in \cite{BCC+14,KKKW18}.
The requirements for the heuristic to be correct are, that (1) if we are stuck in a bottom EC, it definitely returns true eventually and then we stop the simulation; and (2) if there is positive chance to reach a state in the Markov chain $\G[\sigma,\tau]$ (see Section \ref{app:simulate}), there also is a positive chance that the simulation reaches it. 
This would be violated if the simulation depended on a constant number of steps that is smaller than the length of the longest path in the SG.

Correct heuristics for $\STUCK$ are e.g. 
    stopping a simulation after a certain length of the path is exceeded, where this length either grows with the number of simulations or is larger than the size of the partial model;
or one can stop a simulation as soon as a state appears the second time in the path.
This is the heuristic that we use in Algorithm \ref{alg:stuckStandard}.
One could also be rigorous and check that in fact a bottom EC has been reached and there is no way to exit it. However, this check is computationally costly, and using the heuristics proved to be faster.

\begin{algorithm}[htbp]
	\caption{Heuristic to check whether we are probably looping and hence should stop the simulation}\label{alg:stuckStandard}
	\begin{algorithmic}[1]
		\Procedure{$\STUCK$}{State set $X$, state $\state$}
		\State \textbf{return $\state \notin X$}
		\EndProcedure
	\end{algorithmic}
\end{algorithm}

\subsection{$\UPDATE$}\label{app:update}
Algorithm \ref{alg:updateStandard} is the standard update procedure for the full information setting as used in e.g. \cite{BCC+14,HM17} (without the case distinction on players) and \cite{KKKW18}.

Note that we can only update states in $X \setminus \targetset$, since targets are set correctly already, but updating them might be wrong (if they do not self-loop, but go somewhere else). This problem was not addressed in the other papers because of their preprocessing.

\begin{algorithm}[htbp]
	\caption{Standard update procedure}\label{alg:updateStandard}
	\begin{algorithmic}[1]
		\Procedure{$\UPDATE$}{State set $X$, lower bound function $\lb$, upper bound function $\ub$}
		\For {$f \in \set{\ub,\lb}$}  ~~~~~~\Comment{For both functions}
    		\For {$\state \in X \setminus \targetset$} ~~~~\Comment{For all non-target states in the given set}
    		\State $ f(\state) = 
    		    \begin{cases}
    		    \max_{\action \in \Av(\state)} \sum_{\tsucc \in \post(\state,\action)} \trans(\state,\action)(\tsucc) \cdot f(\state) &\mbox{if } \state \in \states<\Box>\\
    		    \min_{\action \in \Av(\state)} \sum_{\tsucc \in \post(\state,\action)} \trans(\state,\action)(\tsucc) \cdot f(\state) &\mbox{if } \state \in \states<\circ>
    		    \end{cases}
    		    $  %~~~~~~\Comment{Apply Bellman update once}
    		\EndFor
    	\EndFor
		\EndProcedure
	\end{algorithmic}
\end{algorithm}

\subsection{$\DEFLATE$ and $\FIND$}\label{app:games}
This section contains the algorithms for finding MSECs and deflating them from \cite{KKKW18}.

Algorithm \ref{alg:findStandard} finds all MSECs in $X$ in the full information setting. 
It works by removing the suboptimal actions of Minimizer and computing MECs in the thus restricted game.
Any MEC that uses only optimal actions of Minimizer is an MSEC according to $\lb$~\cite[Lemma 2]{KKKW18}.
Note that for convergence of the full algorithm we use that $\lb$ converges towards $\val$, so eventually all decisions of Minimizer are set correctly and we find the true MSECs.

\begin{algorithm}[htbp]
	\caption{Algorithm to find all MSECs in the game restricted to $X$}\label{alg:findStandard}
	\begin{algorithmic}[1]
		\Procedure{$\FIND$}{State set $X$}
		\State $\textit{suboptAct}_\circ \gets \set{(\state,\set{\action \in \Av(\state) \mid \widehat\lb(\state,\action) > \widehat\lb(\state)} \mid \state \in \states<\circ> \cap X}$
		\State $\G[\prime] \gets \G$ with $\Av$ replaced by $\Av$ without $\textit{suboptAct}_\circ$
		\State \textbf{return} $\mec(\G[\prime])$
		\EndProcedure
	\end{algorithmic}
\end{algorithm}

Algorithm \ref{alg:deflateStandard} shows how to adjust the upper bounds in an MSEC to ensure convergence, i.e. avoid the over-approximation being stuck at a greater fixpoint than the least.
The resulting upper bound actually is sound for any set of states $X$ given as input~\cite[Lemma 3]{KKKW18}.

\begin{algorithm}[htbp]
	\caption{Algorithm to deflate a set of states}\label{alg:deflateStandard}
	\begin{algorithmic}[1]
		\Procedure{$\DEFLATE$}{State set $X$}
		\For {$\state \in X$}
		    \State $\ub(\state) = \min(\ub(\state),\exit(X,\ub))$
		\EndFor
		\EndProcedure
	\end{algorithmic}
\end{algorithm}

\section{Additional examples}\label{app:bigExample}
\begin{example}\label{ex:distrDelta}
We use the same SG as in Example \ref{ex:upd} and a total error tolerance of $\delta=0.1$.
We count the number of state-action pairs in our partial model as 6, but still have to over-estimate that every one of them has $\frac 1 {p_{\min}} = \frac 1 {0.5} = 2$ successors, so then $\transdelta = \frac {0.1} {12} = 0.008$.

Note that in fact the only state-action pair in this example, where we really needed this error tolerance, is $(\mathsf{s_1,b_2})$. 
Any state-action pair that does not have $\frac 1 {p_{\min}}$ successors reduces the true number of transitions that would need a part of the error tolerance. 
Hence, for realistic models we often still distribute very conservatively.
\qedtriangle
\end{example}

\begin{example}\label{ex:big}
In this example we illustrate how all our functions work together.
We use the full SG from Figure \ref{SGex}, including the dashed part, and set $\mathsf{p_1}=\mathsf{p_2}=0.5$.

Our first simulation reached $\sink$. 
We stayed there until $\STUCK$ returned true, since we looped the required number of times.
$\set{\sink}$ is a MEC in the game restricted to optimal actions of Minimizer and the states explored in the first simulation, and we already ensured it is an EC $\deltasure\text{ly}$.
Hence it is in the set returned by $\FIND$, and $\ub(\sink)$ is set to 0.

Now we simulate enough times, such that we just realized that the upper bound of $(\mathsf{s_1,b_2})$ is something smaller than 1, since we know there is some probability to go to the sink $\sink$ (similar to Example \ref{ex:upd}).
Additionally, we also already know that $(\mathsf{s_0,a_2})$ has a positive probability of reaching the target, a higher one than $(\mathsf{s_1,b_2})$(by simulating paths as in Example \ref{ex:find}).

Then $T=\set{\mathsf{s_0,s_1}}$ forms an MSEC, since it is Minimizer's best choice to remain in $T$ and Maximizer is under the illusion that going to $\mathsf{s_0}$ promises a value of 1, since the $\ub(\mathsf{s_0})$ still is 1; hence the heuristic picks $\mathsf b_1$, as it already knows that $\mathsf b_2$ yields a value smaller than 1.
Thus, the next simulation loops in $T$ until we know that it is an EC $\deltasure\text{ly}$ and thus we stop simulating.
Then deflate decreases the upper bound of all states in $T$, i.e. of $\mathsf{s_0}$ and $\mathsf{s_1}$, to $\widehat\ub(\mathsf{s_1,b_2})$, which is the best thing that Maximizer can achieve in $T$, and the least thing that Minimizer must allow to happen.

Then, in the next simulation, $\mathsf{s_1}$ randomizes uniformly between $\mathsf{b_1}$ and $\mathsf{b_2}$, and hence there is a positive chance of sampling the state-action pair $(\mathsf{s_1,b_2})$ and improving our probability estimate $\widehat\trans(\mathsf{s_1,b_2})$.
Thus, we eventually improve our bounds for $(\mathsf{s_1,b_2})$.

Then the process is repeated, i.e. we are stuck in $T$ again, which we realize immediately this time, since we can access the information from the previous simulations;
then we deflate the upper bounds in $T$ to $\widehat\ub(\mathsf{s_1,b_2})$ and after that again eventually improve the bounds for $(\mathsf{s_1,b_2})$.
This way, we eventually are able to achieve any precision $\epsilon$, since we know all relevant states (actually all states, in this example) and the estimate $\widehat\trans(\mathsf{s_1,b_2})$ becomes more and more precise given more and more samples.

However, as described in Section \ref{sec:full} doing it like this might add up the error, which is why we introduce the two phase approach.
\qedtriangle
\end{example}

\newpage
\section{Experimental Results} \label{app:experiments}

\begin{figure}[ht]
    \centering
    \includegraphics[width=0.48\textwidth]{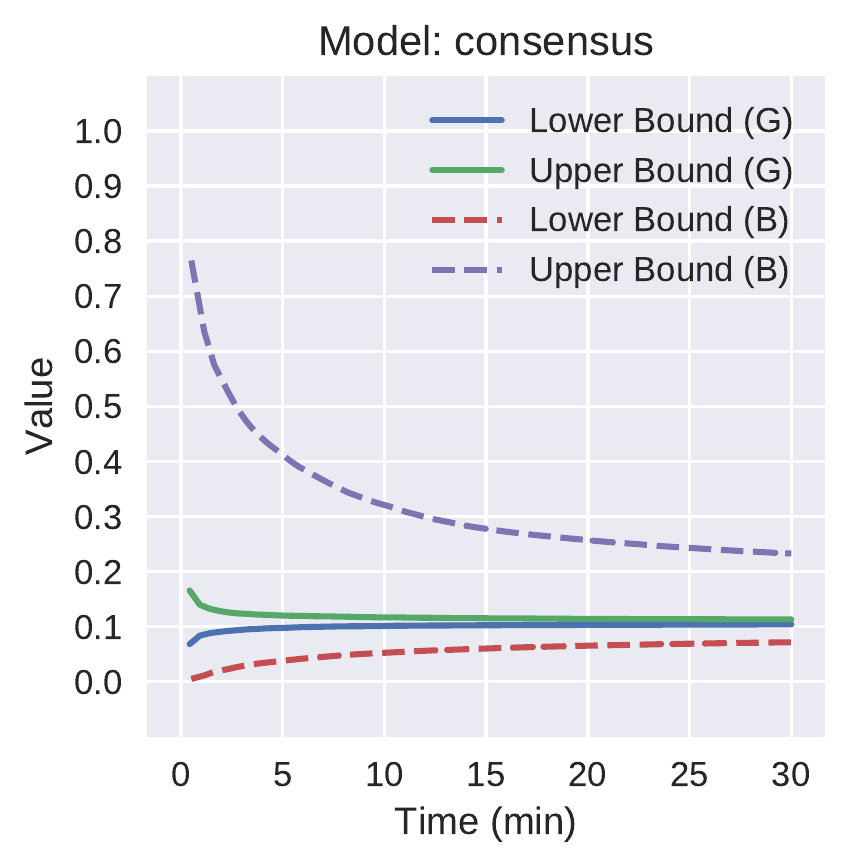}
    \includegraphics[width=0.48\textwidth]{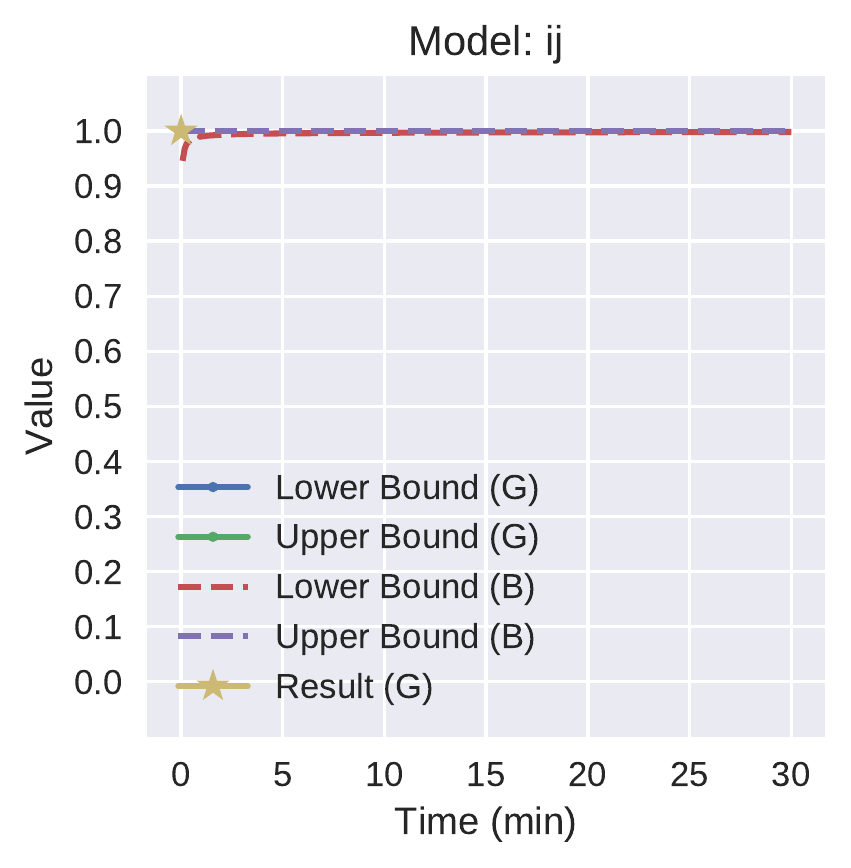}
    
    \includegraphics[width=0.48\textwidth]{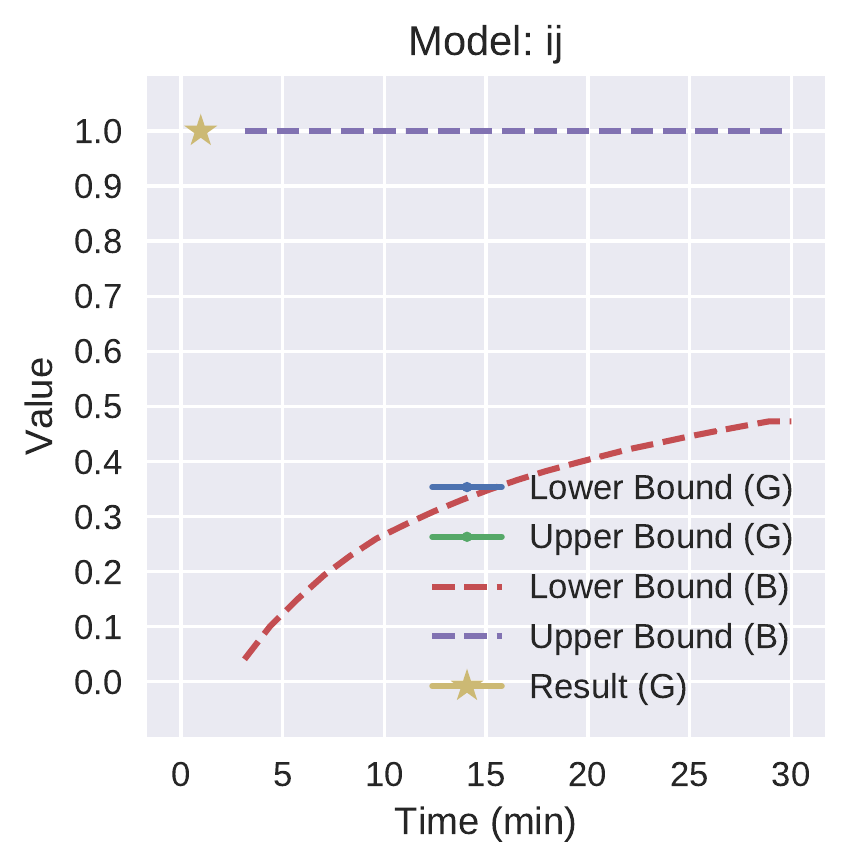}
    \includegraphics[width=0.48\textwidth]{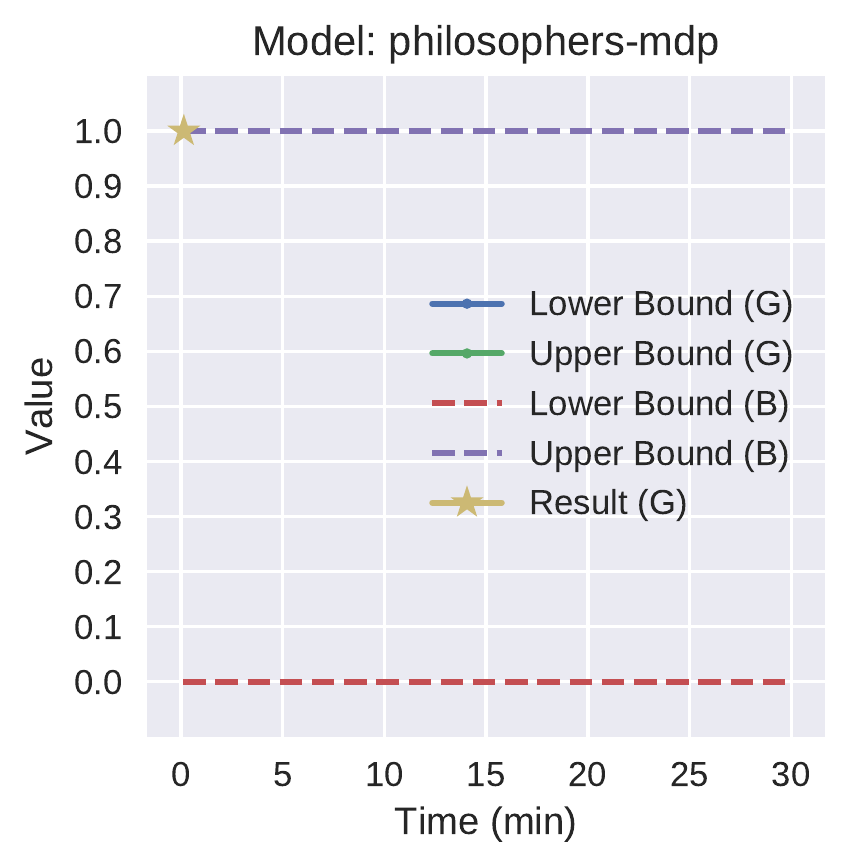}
    \caption{Performance of our algorithm on various MDP and SMG benchmarks in grey- and black box settings. Solid lines denote the bounds in the grey box setting while dashed lines denote the bounds in the black box setting. A star denotes that the algorithm terminated after the difference in bounds became less than $10^{-8}$.}
\end{figure}
\begin{figure}[ht]
    \centering
    \includegraphics[width=0.45\textwidth]{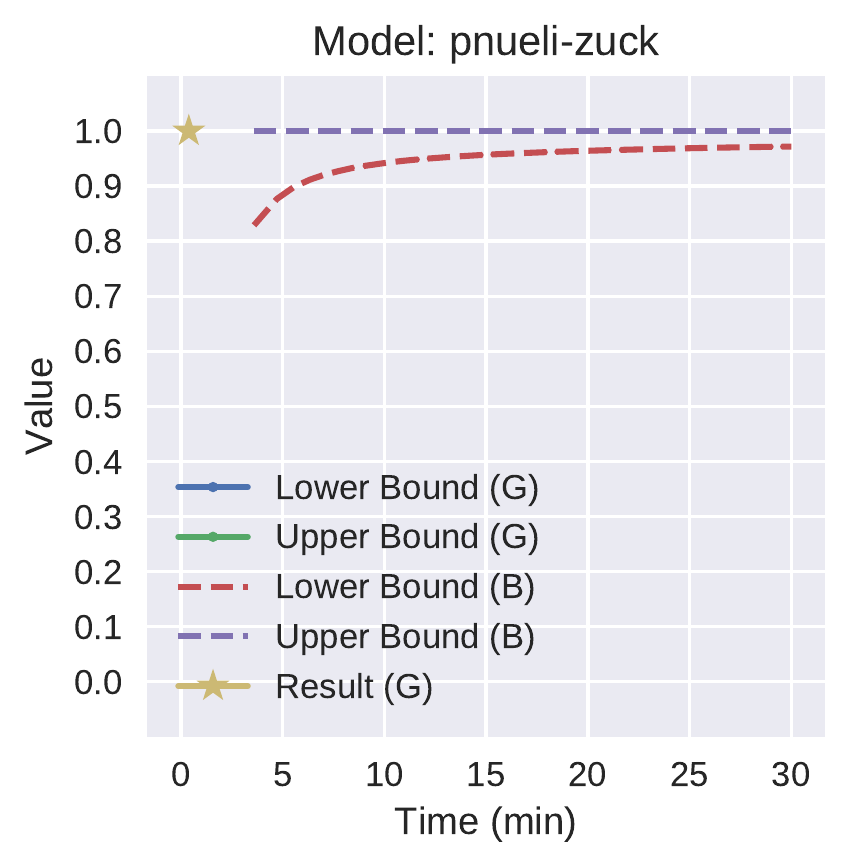}
    \includegraphics[width=0.45\textwidth]{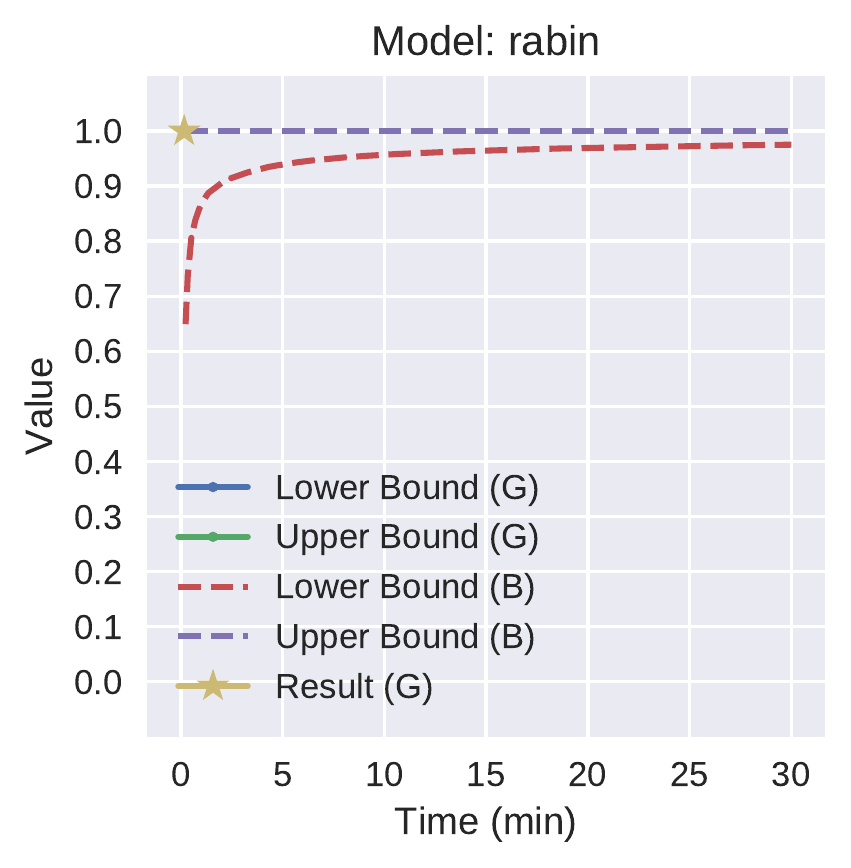}
    
    \includegraphics[width=0.45\textwidth]{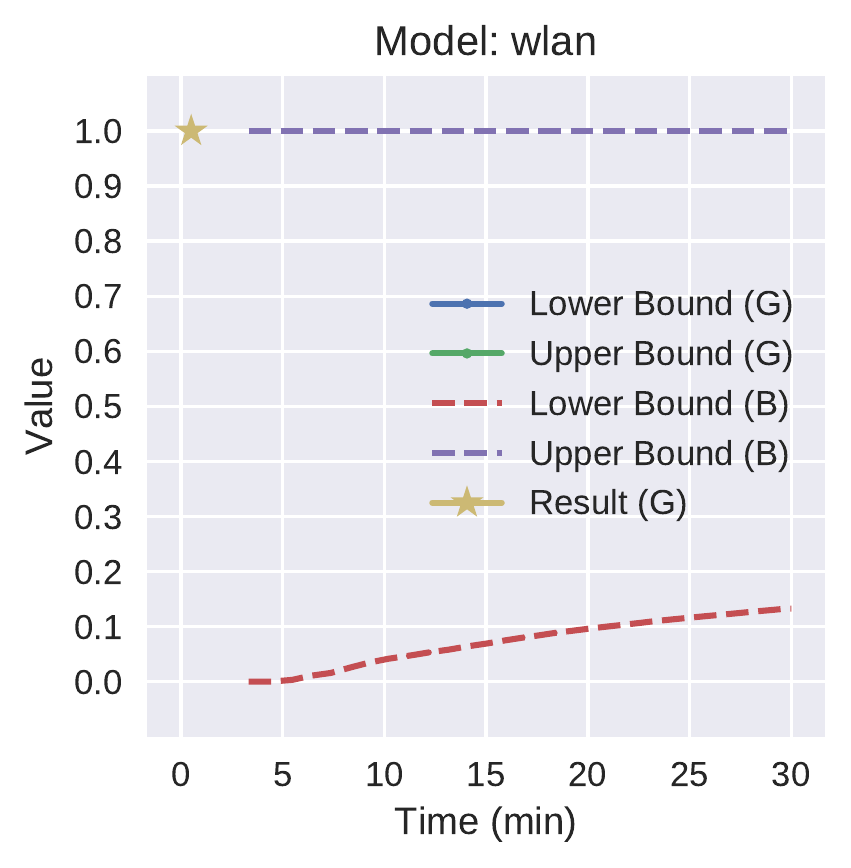}
    \includegraphics[width=0.45\textwidth]{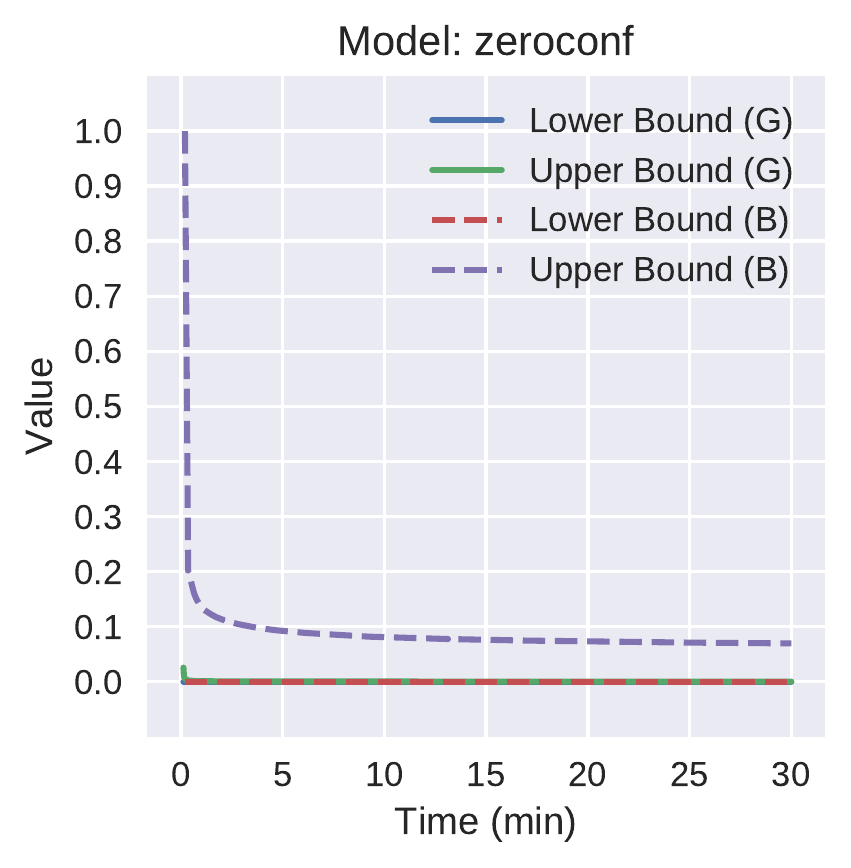}
    
    \includegraphics[width=0.45\textwidth]{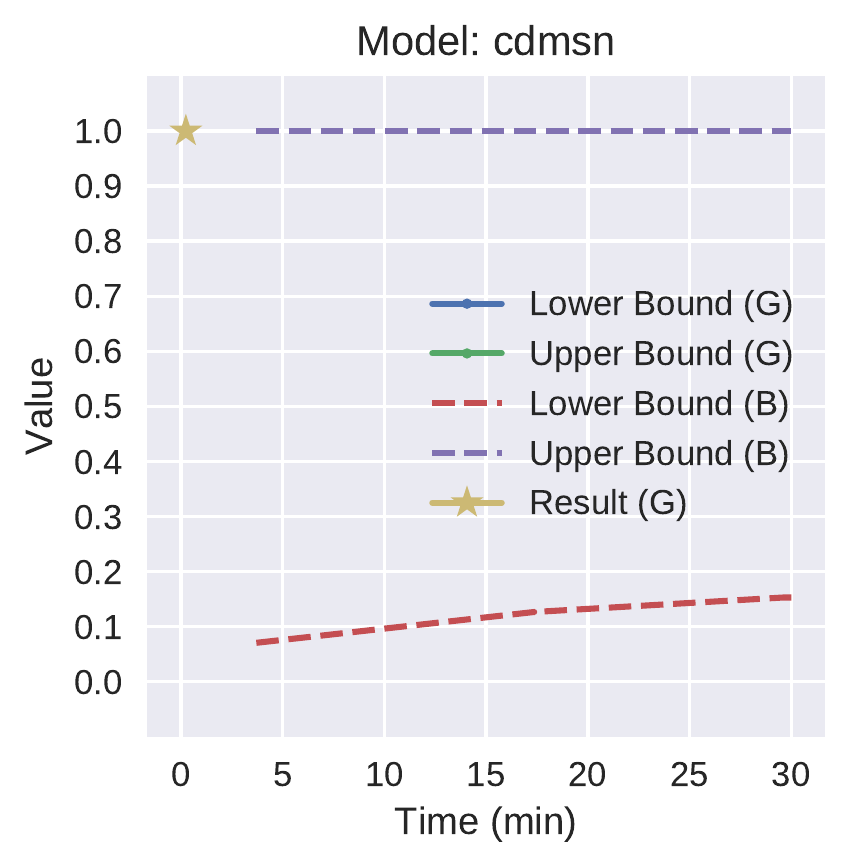}
    \includegraphics[width=0.45\textwidth]{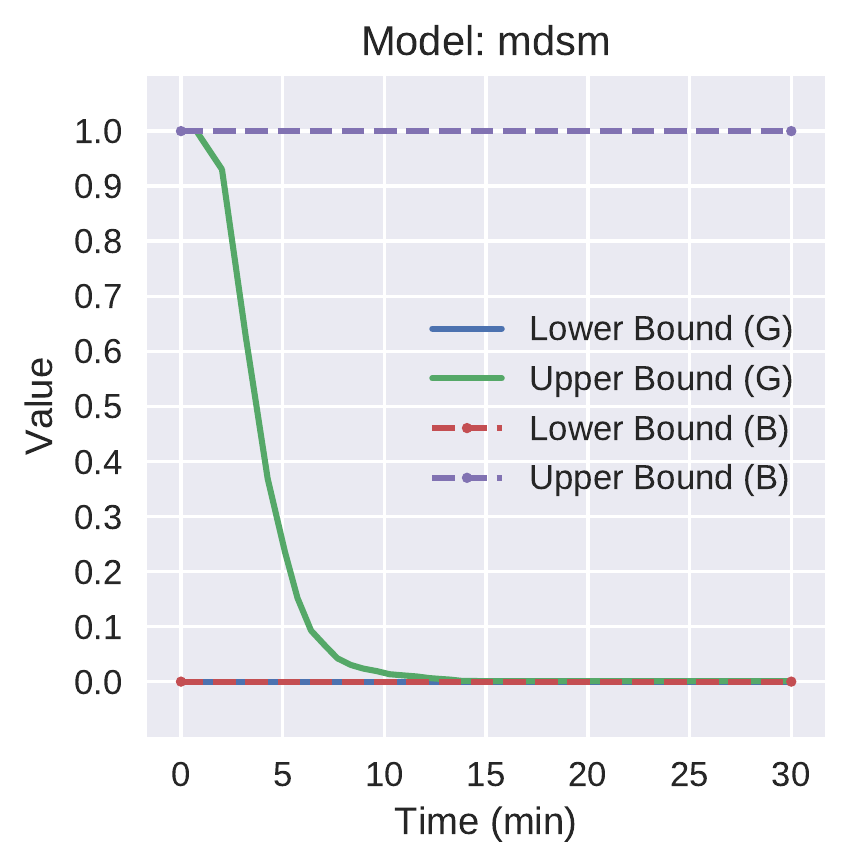}
    \caption{Performance of our algorithm on various MDP and SG benchmarks in grey- and black box settings. Solid lines denote the bounds in the grey box setting while dashed lines denote the bounds in the black box setting. A star denotes that the algorithm terminated after the difference in bounds became less than $10^{-8}$.}
\end{figure}

% Please add the following required packages to your document preamble:
% \usepackage{multirow}
% \usepackage{lscape}
\begin{landscape}
\begin{table}[]
\setlength{\tabcolsep}{8pt}
\centering
\caption{A full version of Table \ref{tab:main-result}} \label{app:tab:complete-results-table}
\begin{tabular}{rlrlcllccl}
\toprule
\multirow{2}{*}{Model} & \multirow{2}{*}{Type} & \multirow{2}{*}{States} & \multirow{2}{*}{Property} & \multirow{2}{*}{Explored \%} & \multicolumn{2}{c}{Precision} & \multirow{2}{*}{Grey Bounds} & \multirow{2}{*}{Black Bounds} & \multirow{2}{*}{True Value}\\
\cmidrule(lr){6-7}
                   &      &        &                     & Grey/Black & Grey & Black  & Lower/Upper   & Lower/Upper   &        \\ \midrule
brp                & DTMC & 677    & p1                  & 95/91   & 0.00233 & 0.1937 & 0/0.0023      & 0/0.1937      & 0.0004 \\
crowds             & DTMC & 1,198  & positive            & 96/96   & 0.00558 & 0.3154 & 0.0507/0.0563 & 0.0412/0.3567 & 0.0529 \\
haddad-monmege     & DTMC & 41     & targt               & 100/-   & 0.87843 & 1      & 0.1086/0.987  & 0/1           & 0.7    \\
leader-sync-3-2    & DTMC & 26     & eventually\_elected & 100/100 & 0       & 0.0037 & 1/1           & 0.9962/1      & 1      \\ \midrule
consensus          & MDP  & 272    & disagree            & 100/100 & 0.00945 & 0.171  & 0.1039/0.1133 & 0.0694/0.2404 & 0.1083 \\
csma-2-2           & MDP  & 1,038  & some\_before        & 93/93   & 0.00127 & 0.2851 & 0.4993/0.5005 & 0.3792/0.6643 & 0.5    \\
firewire           & MDP  & 83,153 & deadline            & 55/-    & 0.0057  & 1      & 0.4968/0.5025 & 0/1           & 0.5    \\
ij-10              & MDP  & 1,023  & stable              & 100/100 & 0       & 0.5407 & 1/1           & 0.4592/1      & 1      \\
ij-3               & MDP  & 7      & stable              & 100/100 & 0       & 0.0017 & 1/1           & 0.9982/1      & 1      \\
pacman             & MDP  & 498    & crsh                & 18/47   & 0.00058 & 0.0086 & 0.5508/0.5514 & 0.5477/0.5564 & 0.5511 \\
philosophers-mdp-3 & MDP  & 956    & eat                 & 56/21   & 0       & 1      & 1/1           & 0/1           & 1      \\
pnueli-zuck-3      & MDP  & 2,701  & live                & 25/71   & 0       & 0.0285 & 1/1           & 0.9714/1      & 1      \\
rabin-3            & MDP  & 27,766 & live                & 7/4     & 0       & 0.026  & 1/1           & 0.9739/1      & 1      \\
wlan-0             & MDP  & 2,954  & sent                & 100/100 & 0       & 0.8667 & 1/1           & 0.1332/1      & 1      \\
zeroconf           & MDP  & 670    & correct\_max        & 29/27   & 0.00007 & 0.0586 & 0/0           & 0/0.0586      & 0      \\ \midrule
cdmsn              & SG  & 1,240  & all\_prefer\_one    & 100/98  & 0       & 0.8588 & 1/1           & 0.1411/1      & 1      \\
cloud-5            & SG  & 8,842  & eventually\_deploy  & 49/20   & 0.00031 & 0.0487 & 0.9996/1      & 0.9512/1      & 0.9999 \\
mdsm               & SG  & 62,245 & player\_1\_deviate  & 69/-    & 0.09625 & 1      & 0.9031/0.9993 & 0/1           & 0.989  \\
mdsm               & SG  & 62,245 & player\_2\_deviate  & 72/-    & 0.00055 & 1      & 0/0.0005      & 0/1           & 0      \\
team-form-3        & SG  & 12,476 & completed           & 64/-    & 0       & 1      & 1/1           & 0/1           & 1      \\ \bottomrule
\end{tabular}
\end{table}
\end{landscape}

\section{Proofs}

\subsection{Proof that $\widehat{\trans}$ is the best estimate possible by using the Hoeffding bound}\label{app:hoeffding}

We view sampling $\tsucc$ from state-action pair $(\state,\action)$ as a Bernoulli sequence,  with success probability $\trans(\state,\action,\tsucc)$ and the number of trials $\#(\state,\action)$.
Given the number of successes $\#(\state,\action,\tsucc)$, we want to find a lower estimate $\widehat\trans$, such that $\Prob(\widehat\trans \geq \trans(\state,\action,\tsucc)) \leq \transdelta$.

\begin{align*}
    \Prob \left(\widehat\trans(\state,\action,\tsucc) \geq \trans(\state,\action,\tsucc)\right) 
    &= \Prob\left(\frac{\#(s,a,t)}{\#(s,a)} - c \geq \trans(\state,\action,\tsucc)\right) \tag{Equation \ref{eq:transEst}}\\
    &\leq e^{-2c^2 \#(\state,\action)} \tag{\cite[Theorem 1]{hoeffding}}\\
    &\leq \transdelta \tag{Required bound on error probability}
\end{align*}
Solving the last two lines for the confidence width $c$ yields $c \geq \sqrt{\frac{\ln(\transdelta)}{-2 \#(\state,\action)}}$.

\subsection{Proof of Lemma \ref{lem:updateCorr}}\label{app:proofUpdate}
\setcounter{lemma}{0}
We slightly reformulate the lemma.
\begin{lemma}[$\UPDATE$ is correct]
Let $\lb$ and $\ub$ be correct under- respectively over-approximations of the value function $\val$, i.e. for all states $\state$ it holds that $\lb(\state) \leq \val(\state) \leq \ub(\state)$; 
and let $\lb'$ and $\ub'$ be the new estimates after an application of $\UPDATE$.

Assuming that $\widehat\trans(\state,\action,\tsucc) \leq \trans(\state,\action,\tsucc)$, $\UPDATE$ is correct, i.e. for all states $\state$ it holds that $\lb'(\state) \leq \val(\state) \leq \ub'(\state)$
\end{lemma}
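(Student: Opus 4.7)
The plan is to reduce the statement to a per-action sandwich: for every state-action pair $(\state,\action)$,
\[
\widehat{\lb}(\state,\action) \;\le\; \val(\state,\action) \;\le\; \widehat{\ub}(\state,\action).
\]
Once this is in hand, the update in Algorithm~\ref{alg:updateNew} takes $\max$ or $\min$ over $\Av(\state)$, and the Bellman equation from Section~\ref{sec:prelimReach} writes $\val(\state)$ as the same $\max$ or $\min$ of $\val(\state,\action)$, so monotonicity of $\max$/$\min$ gives $\lb'(\state)\le\val(\state)\le\ub'(\state)$ for every non-target non-sink $\state$ of either player. Target and sink states are not touched by $\UPDATE$, and they were already correct by the initialization, so they need no separate argument.

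For the lower bound, I would observe that $\lb(\tsucc)\ge 0$ for every $\tsucc$ and $\widehat{\trans}(\state,\action,\tsucc)\le\trans(\state,\action,\tsucc)$ by hypothesis, so restricting the sum to observed successors and replacing $\trans$ by $\widehat{\trans}$ only decreases it:
\[
\widehat{\lb}(\state,\action) \;=\; \sum_{\tsucc:\,\#(\state,\action,\tsucc)>0}\widehat{\trans}(\state,\action,\tsucc)\,\lb(\tsucc) \;\le\; \sum_{\tsucc\in\post(\state,\action)} \trans(\state,\action,\tsucc)\,\val(\tsucc) \;=\; \val(\state,\action),
\]
using $\lb\le\val$ in the last step. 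This part is routine.

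The upper bound side is the main obstacle, because $\widehat{\ub}$ assigns all unobserved probability mass the worst-case value $1$, and one must show this overshoots $\val(\state,\action)$ even though $\widehat{\trans}\le\trans$ (which nominally underestimates the weights on the known $\ub$-values). Writing $S=\{\tsucc:\#(\state,\action,\tsucc)>0\}$ and $T=\post(\state,\action)\supseteq S$ and using $\sum_{\tsucc\in T}\trans(\state,\action,\tsucc)=1$, the desired inequality $\widehat{\ub}(\state,\action)\ge\val(\state,\action)$ rearranges equivalently to
\[
\sum_{\tsucc\in S}\widehat{\trans}(\state,\action,\tsucc)\,(1-\ub(\tsucc)) \;\le\; \sum_{\tsucc\in T}\trans(\state,\action,\tsucc)\,(1-\val(\tsucc)).
\]
I would then chain three monotonicities on the left-hand side: (i) $\widehat{\trans}\le\trans$ with $1-\ub(\tsucc)\ge 0$ (since $\ub\le 1$) lets me replace $\widehat{\trans}$ by $\trans$; (ii) $\val\le\ub$ gives $1-\ub(\tsucc)\le 1-\val(\tsucc)$; and (iii) the summands $\trans(\state,\action,\tsucc)(1-\val(\tsucc))$ are non-negative since $\val\le 1$, so extending the sum from $S$ to $T$ can only enlarge it. In particular, this also shows $\sum_{\tsucc\in S}\widehat{\trans}(\state,\action,\tsucc)\le 1$, so the remainder term $1-\sum_{\tsucc\in S}\widehat{\trans}$ in $\widehat{\ub}$ is non-negative and the formula is well defined. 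Combined with the reduction in the first paragraph, this completes the proof.
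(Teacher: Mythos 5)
Your proof is correct and follows essentially the same route as the paper's: reduce to the per-action inequality $\widehat{\lb}(\state,\action)\le\val(\state,\action)\le\widehat{\ub}(\state,\action)$ and then conclude via monotonicity of $\max$/$\min$. Your rewriting of the upper-bound inequality in terms of the complements $1-\ub(\tsucc)$ and $1-\val(\tsucc)$ is just an algebraic repackaging of the paper's step of splitting $\trans$ into $\widehat\trans$ plus a non-negative remainder whose value is bounded by $1$; the ingredients used ($\widehat\trans\le\trans$, $\ub\le 1$, $\val\le\ub$, $\sum_{\tsucc\in\post(\state,\action)}\trans(\state,\action,\tsucc)=1$, and the fact that unobserved successors have $\widehat\trans=0$) are identical.
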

\begin{proof}
We prove, that actually for every action $\action \in \Av(\state)$ the claim holds.
Then trivially it also holds for the whole state, independent of player, since if the estimates for every action are correct, also the maximum/minimum estimate is correct.

We exemplify the proof for the upper bound; the lower bound is analogous.
\begin{align*}
    \val(\state,\action) &\leq 
    \sum_{\tsucc \in \post(\state,\action)} \trans(\state,\action,\tsucc) \cdot \ub(\tsucc) 
        \tag{Bellman equation and $\ub$ is correct over-approximation}\\
    &=
    \sum_{\tsucc \in \post(\state,\action)} 
        \widehat\trans(\state,\action,\tsucc) \cdot \ub(\tsucc)
        +
        (\trans(\state,\action,\tsucc) - \widehat\trans(\state,\action,\tsucc)) \cdot \ub(\tsucc)
        \tag{splitting $\trans$ into the estimate and the remaining probability}\\
    &\leq 
    \sum_{\tsucc \in \post(\state,\action)} 
        \widehat\trans(\state,\action,\tsucc) \cdot \ub(\tsucc)
        +
        (\trans(\state,\action,\tsucc) - \widehat\trans(\state,\action,\tsucc)) \cdot 1
        \tag{1 is the maximal value, so $\ub(\tsucc) \leq 1$}\\
    &=
    \left(\sum_{\tsucc \in \post(\state,\action)} \widehat\trans(\state,\action,\tsucc) \cdot \ub(\tsucc)\right)
    + 
    \left(\sum_{\tsucc \in \post(\state,\action)} \trans(\state,\action,\tsucc)\right)\\
    &~~~~~~~~~~~~~~~~~~~~~~~~~~~ -
    \left(\sum_{\tsucc \in \post(\state,\action)} \widehat\trans(\state,\action,\tsucc)\right)
        \tag{splitting the sum}\\
    &=
    \left(\sum_{\tsucc \in \post(\state,\action)} \widehat\trans(\state,\action,\tsucc) \cdot \ub(\tsucc)\right)
    +
    1
    -
    \left(\sum_{\tsucc \in \post(\state,\action)} \widehat\trans(\state,\action,\tsucc)\right)
        \tag{$\trans(\state,\action)$ is a probability distribution}\\
    &= 
    \left(\sum_{\tsucc: \#(\state,\action,\tsucc)>0} \widehat\trans(\state,\action,\tsucc) \cdot \ub(\tsucc)\right) + 1 - \left(\sum_{\tsucc: \#(\state,\action,\tsucc)>0} \widehat\trans(\state,\action,\tsucc)\right)
        \tag{$*$}\\
    &= \widehat\ub(\state,\action) 
            \tag{Definition of $\widehat\ub$}\\
    &= \ub'(\state,\action)
            \tag{Line \ref{line:update} in $\UPDATE$}
\end{align*}
Reasoning $(*)$: If for a state $\tsucc$ it holds that $\#(\state,\action,\tsucc)>0$, then $\tsucc \in \post(\state,\action)$, because otherwise the state cannot have been sampled; so no summands are added.
If there are states in $\post(\state,\action)$ that have not been sampled yet, then $\widehat\trans(\state,\action,\tsucc) = 0$ as $\#(\state,\action,\tsucc)=0$; thus these states have no impact on the sum.

So for every action $\val(\state,\action) \leq \ub'(\state,\action)$, so the new estimate is correct.
\qedsquare
\end{proof}

\subsection{Proof of Theorem \ref{thm:anyN}}
\setcounter{theorem}{0}

\begin{theorem}
For any choice of sequence for $\theN$, 
Algorithm \ref{alg:blackBVI} is an anytime algorithm with the following property:
When it is stopped, it returns an interval for $\val(\initstate)$ that is PAC for error tolerance $\delta$ and some $\varepsilon'$, with $0 \leq \varepsilon' \leq 1$.
\end{theorem}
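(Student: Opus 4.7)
The plan is to combine three ingredients: (i) each individual BVI phase maintains correct bounds with high probability; (ii) the per-phase error budgets $\deltaIter$ sum to at most $\delta$; (iii) the bounds are always trivially in $[0,1]$, so the returned interval has width at most $1$ regardless of when the algorithm is interrupted.

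First I would establish the per-phase correctness. Fix the $i$-th iteration of the main loop, with $k = 2^i$ and $\deltaIter = \delta/k$. During the preceding simulation phase we accumulate counters $\#(\state,\action,\tsucc)$ and determine the current explored fragment $\Vis$. The key claim is that, with probability at least $1 - \deltaIter$, two events jointly hold: (a) every probability estimate $\widehat\trans(\state,\action,\tsucc)$ computed with confidence parameter $\transdelta$ is a true under-approximation of $\trans(\state,\action,\tsucc)$, and (b) every set $T$ for which $\SUREEC(T)$ returned true is indeed an EC of the true SG. By the Hoeffding derivation in Appendix D.1, each single estimate fails with probability at most $\transdelta$; by the geometric tail bound used for $\SUREEC$, each EC-certification fails with probability at most $\transdelta$. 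The choice $\transdelta = \deltaIter \cdot p_{\min}/|\{\action \mid \state\in\Vis,\ \action\in\Av(\state)\}|$ together with the bound $|\post(\state,\action)| \le 1/p_{\min}$ lets a union bound over all relevant transitions and EC-checks yield the combined failure probability $\leq \deltaIter$.

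Conditioned on that event, I would show inductively that the bounds produced by the BVI phase are correct, i.e.\ $\lb(\state) \le \val(\state) \le \ub(\state)$ for every state $\state$. The base case holds because line~\ref{line:reinit} re-executes $\INITIALIZE$, which trivially yields correct bounds. The inductive step for $\UPDATE$ is exactly Lemma~\ref{lem:updateCorr}. For $\DEFLATE$, an argument analogous to the soundness proof in \cite[Lemma~3]{KKKW18} applies: whenever $T$ is genuinely an EC (guaranteed by (b)) and $\widehat\ub$ is a correct over-approximation of $\val(\state,\action)$ for each exit (guaranteed by (a) together with Lemma~\ref{lem:updateCorr}), lowering $\ub$ to $\exit(T,\widehat\ub)$ preserves over-approximation of $\val$.

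Next I would combine the phases. Since the BVI phase runs for only finitely many iterations and never invalidates bounds (by the inductive argument above), at the end of phase $i$ the interval $[\lb(\initstate),\ub(\initstate)]$ is correct with probability at least $1-\deltaIter$. Taking a union bound over all phases ever executed gives total failure probability at most $\sum_{i\ge 1} \delta/2^i = \delta$, exactly as in the discussion following line~\ref{line:endBVI}. For the anytime property: the algorithm can be stopped after any completed BVI phase and return the current interval; since $\INITIALIZE$ sets $\lb(\state)\in\{0,1\}$ and $\ub(\state)=1$, and every subsequent $\UPDATE$ or $\DEFLATE$ operation produces values in $[0,1]$, the returned width $\varepsilon' = \ub(\initstate)-\lb(\initstate)$ satisfies $0 \le \varepsilon' \le 1$ irrespective of the sequence chosen for $\theN$.

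The main obstacle I anticipate is handling the interaction between $\DEFLATE$ and the estimated bounds: the soundness of deflation in \cite{KKKW18} relied on exact probabilities, so one must verify carefully that replacing $\ub$ by $\widehat\ub$ in the $\exit$ computation remains sound whenever the underlying probability estimates are correct lower bounds and the EC-candidate is a true EC. Once that is in place, the rest is a straightforward union bound plus the geometric summation of error budgets.
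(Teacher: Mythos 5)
Your overall architecture matches the paper's: condition on the correctness of all probability estimates, show that this holds with probability $1-\deltaIter$ per BVI phase via a union bound, sum the $\deltaIter$ geometrically to $\delta$, and then argue by induction that $\INITIALIZE$, $\UPDATE$ and $\DEFLATE$ preserve correct bounds, with the width trivially in $[0,1]$. The gap is in how you account for the soundness of the EC certification. You propose to union-bound over ``all relevant transitions \emph{and} EC-checks,'' charging each $\SUREEC$ failure its own $\transdelta$. But the algorithm's allocation $\transdelta = \deltaIter\cdot p_{\min}/\abs{\set{\action\mid\state\in\Vis\wedge\action\in\Av(\state)}}$ is already exactly exhausted by the transition estimates alone: the number of transitions is over-approximated by $\abs{\set{(\state,\action)}}/p_{\min}$, so summing $\transdelta$ over them yields $\deltaIter$ with nothing left over. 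Adding a separate failure event for each certified staying pair pushes the total above $\deltaIter$, so your union bound does not close for the algorithm as actually specified.

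The paper avoids this by showing that the EC-certification failure event needs no budget at all: it is \emph{contained} in the event that some transition estimate is wrong. Concretely, in the $\DEFLATE$ case analysis, if a set $T$ passes $\SUREEC$ (every staying pair sampled at least $n\geq\ln(\transdelta)/\ln(1-p_{\min})$ times) yet some apparently-staying $(\state,\action)$ actually has an unobserved exit, then $\sum_{u\in T}\trans(\state,\action,u)\leq 1-p_{\min}$, while the lower estimates of the staying probabilities sum to at least $1-\abs{\post(\state,\action)}\cdot c$; the inequality $p_{\min}>\abs{\post(\state,\action)}\cdot c$, which follows from the required sample count $n$, then forces $\widehat\trans(\state,\action,u)>\trans(\state,\action,u)$ for some $u$, contradicting the assumption that all estimates are correct lower bounds. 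This containment is the genuinely non-trivial step of the proof, and it is exactly the piece your sketch replaces with an unaffordable union bound. You correctly anticipated that $\DEFLATE$ is the delicate point, but you located the difficulty in the wrong place: given that $T$ is a true EC and the estimates are correct, deflating to $\exit(T,\widehat\ub)$ is the easy part (it follows from the Lemma~\ref{lem:updateCorr}-style inequality $\widehat\ub(\state,\action)\geq\ub(\state,\action)$ together with \cite[Lemma~2]{KKKW18}); the hard part is establishing that $T$ really is an EC without spending any extra error probability.
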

\begin{proof}
%\textcolor{blue}{Note that the statement of the theorem might seem trivial at first glance, as one can easily specify the algorithm that always returns the bounds [0,1]. 
%The theorem also holds for this useless algorithm.
%However, our algorithm computes something useful, and hence it is an achievement that whatever it computes is correct with the given error tolerance.}

We proceed in two steps:
First we prove, that we can continue with the assumption that our partial model is correct, as this assumption is only violated with a probability greater than the given error tolerance $\delta$ (intuitively, this is the 'probably' in the PAC-guarantee).
Then we prove that under this assumption all the computations of our algorithms are correct and return sensible bounds.

\paragraph{Assumption 1:} \emph{Every time the standard BVI (Lines \ref{line:beginBVI}-\ref{line:endBVI}) is executed, it holds that during the whole computation, for all $\state \in \Vis, \action \in \Av(\state), \tsucc \in \post(\state,\action)$ we have that $\widehat\trans(\state,\action,\tsucc) \leq \trans(\state,\action,\tsucc)$.}

We now prove that Assumption 1 is only violated with probability smaller than $\delta$, as sketched in Section \ref{sec:full}.
\begin{itemize}
    \item Every standard BVI depends on a fixed set of simulations, and hence on fixed values of the counters, so during the computation the probability estimates do not change. 
    \item Every standard BVI uses its own error tolerance $\deltaIter$. As argued in Section \ref{sec:full}, the sum of all these error tolerances is $\delta$.
    \item This error tolerance $\deltaIter$ is distributed over all transitions in the model as described in Section \ref{sec:safeUpdate}, where the number of all transitions is over-approximated. 
    Hence the sum of the error probability in the whole partial model is bounded as follows:
    \begin{align*}
        \sum_{\state \in \Vis, \action \in \Av(\state), \tsucc \in \post(\state,\action)} \transdelta &=
        \abs{\set{(\state,\action,\tsucc) \mid \state \in \Vis, \action \in \Av(\state), \tsucc \in \post(\state,\action)}}  ~ \cdot ~ \transdelta\\
        &\leq \abs{\set{\action \mid \state \in \Vis \wedge \action \in \Av(\state)}} \cdot \postmax ~ \cdot ~ \transdelta \tag{using over-approximation of the number of transitions}\\
        &=  \deltaIter \tag{by definition of $\transdelta$ in Line \ref{line:beginBVI}.}
    \end{align*}
\end{itemize}
So in total, the sum of the error probability in each partial model is bounded by $\deltaIter$, and the sum of all $\deltaIter$ is bounded by $\delta$, so the overall error is bounded by $\delta$.
Thus, we continue with Assumption 1.

It remains to prove that every time our algorithms modify the bounds, these modifications are correct.
We do this by an induction as follows:
For every standard BVI phase, the bounds are initialized conservatively in Line \ref{line:reinit}, so they are certainly correct in the beginning, i.e. $\lb(\state)\leq\val(\state)\leq\ub(\state)$.

Assuming that $\lb$ and $\ub$ are correct, we now prove
for each of our algorithms that modify $\lb$ and $\ub$, that after the modification they still are correct.

\begin{itemize}
    \item[$\UPDATE$:] Given the assumption that $\lb$ and $\ub$ are correct and Assumption 1, Lemma 1 proves our goal.
    \item[$\DEFLATE$:] 
    Note that the case of a target being inside the deflated state set $T$ is handled correctly, since in that case all upper bounds remain at 1 and hence are correct over-approximations.
    We continue with the assumption that $\targetset \cap T = \emptyset$.
    We now make a case distinction on whether we know about the $\exit(T)$ of Maximizer in our partial model.
    \begin{itemize}
        \item[We know $\exit$ and it exists:] This means there exists some state-action pair $(\state,\action)$ with $\state \in \states<\Box>$ such that $(\state,\action) \leaves T$,
        and $\widehat\trans(\state,\action,\tsucc)>0$ for some $\tsucc \notin T$.
        Note that by induction hypothesis $\ub(\tsucc)$ is correct.
        Then the value of all states in $T$ is set to $\widehat\ub(\state,\action)$.
        By a similar argument as in the proof of Lemma \ref{lem:updateCorr} we get that 
        $\widehat\ub(\state,\action) \geq \ub(\state,\action)$.
        By \cite[Lemma 2]{KKKW18} we know that $\ub(\state,\action) \geq \val(u)$ for all $u \in T$.
        Thus, after deflating the upper bound still is correct.
        
        \item[There is no $\exit$: ] In this case, Maximizer cannot leave $T$. Note that there is no target in $T$. Hence the value of all states is 0, and deflate sets the upper bound to 0 correctly.
        
        \item[There is a $\exit$, but we do not know it in our partial model:]
        This case violates Assumption 1.
        Before deflate is called on a state set $T$, $\FIND$ ensures that $T$ is an EC $\deltasure\text{ly}$. 
        So if there exists an exiting state-action pair $(\state,\action)$, we have sampled it the number of steps required in Algorithm \ref{alg:sureEC}.
        We arrive at a contradiction by showing that the sum of all the estimated probabilities of the staying actions is larger than the actual sum of all staying probabilities, which violates Assumption 1. 
        This is done by the following chain of equations:
        \begin{align*}
            \sum_{u \in T} \trans(\state,\action,u) &\leq 1 - p_{\min} \tag{since there is at least one $\tsucc \in \post(\state,\action) \setminus T$}\\
            &< 1 - (\abs{\post(\state,\action)} \cdot c) \tag{$c$ is the confidence width, argument for $p_{\min} >\abs{\post(\state,\action)} \cdot c$ below}\\
            &= 1 - \sum_{u \in \post(\state,\action)} c \tag{rewrite}\\
            &< 1 - \sum_{u \in T \cap \post(\state,\action)} c \tag{$(\state,\action)$ is exiting.}\\
            &= \sum_{u \in T \cap \post(\state,\action)} \frac{\#(\state,\action,u)}{\#(\state,\action)} - \sum_{u \in T \cap \post(\state,\action)} c \tag{only sampled successors in $T$, hence left sum adds up to 1}\\
            &= \sum_{u \in T \cap \post(\state,\action)} \frac{\#(\state,\action,u)}{\#(\state,\action)} - c \tag{pulling sums together}\\
            &\leq \sum_{u \in T \cap \post(\state,\action)} \widehat\trans (\state,\action,u) \tag{definition of $\widehat\trans$}\\
            &= \sum_{u \in T} \widehat\trans (\state,\action,u) \tag{Summand for states $\notin \post(\state,\action)$ is 0.}
        \end{align*}
        
        It remains to show $p_{\min} > \abs{\post(\state,\action)} \cdot c$.
        We use $M \eqdef\abs{\post(\state,\action)}$ to improve readability.
        Note that $c \eqdef \sqrt{\frac{\ln(\transdelta)}{-2 \cdot \#(\state,\action)}}$,
        and since $\#(\state,\action) \geq \frac{\ln(\transdelta)}{\ln(1-p_{\min})}$ by the fact that we have a $\deltasure$ EC, it holds that
        $c \leq \sqrt{\frac{\ln(1-p_{\min})}{-2 }}$.
        \begin{align*}
            &p_{\min} > M \cdot \sqrt{\frac{\ln(1-p_{\min})}{-2 }} \\
            \iff& (p_{\min})^2 + \frac{M^2}{2} \cdot \ln(1-p_{\min}) > 0 \tag{squaring, adding right side}
        \end{align*}
        Let us call the left side of this inequation $f(p_{\min})$.
        Note that $f(0)=0$ and for any $M \in \mathbb{N}$ and $p_{\min} \in [0,1): f(p_{\min}) > 0$.
        Thus, since $0 < p_{\min} < 1$, $f(p_{\min}) > 0$ and we proved the claim.
    \end{itemize}
\end{itemize}

\qedsquare
\end{proof}

\subsection{Proof of Theorem 2}\label{app:t2}

\begin{theorem}
There exists a choice of $\theN$, such that 
Algorithm \ref{alg:blackBVI} is PAC for any input parameters $\varepsilon, \delta$, i.e. it terminates almost surely 
and returns an interval for $\val(\initstate)$ of width smaller than $\varepsilon$ 
that is correct with probability at least $1-\delta$.
\end{theorem}
\begin{proof}
Note that in this proof we will provide a lower bound for $\theN$ that is most probably astronomically high and practically infeasible.
This is because the theorem only claims that there exists a choice to get the PAC guarantee. 
For practical purposes, instead of fixing an $\varepsilon$ a priori, it is more reasonable to use the property of Theorem 1, i.e. that typically a good $\varepsilon$ is achieved quickly.

Further, we need to modify the definition of the confidence width $c$ slightly (making our probability estimates more conservative) in order to obtain the guarantee\footnote{We use the two-sided version of the Hoeffding bound instead of the one-sided version. This change entails that the computation of $\widehat\trans$ is different, since $c$ is a larger number. In a future version of this paper, we will adjust the whole paper to use the more conservative probability estimates for which we can prove convergence. However, note that using the one-sided variant has the advantage of being more practical while still being correct.
Theorem 2 actually is only proven for for the variation of Algorithm \ref{alg:blackBVI} that uses the different $\widehat\trans$. For the Algorithm as it is stated in the main body, it is unclear whether it converges.}.

\paragraph{Assumption 2:}
\emph{Every time the standard BVI (Lines 11-16) is executed, it holds that for all $s \in \Vis, a \in \Av(s), t \in \post(s,a)$ we have $\widehat\trans(s,a,t) \geq \trans(s,a,t) - 2c$.}

The difference to Assumption 1 is that Assumption 2 requires a maximum difference between $\trans$ and $\widehat\trans$, namely $2c$.
To achieve this, we use the two sided variant of the Hoeffding bound, i.e.

\begin{align*}
    \Prob\left(\abs{\frac{\#(s,a,t)}{\#(s,a)} - \trans(s,a,t)} \geq c\right) &\leq 2 \cdot e^{-2c^2 \#(s,a)} \tag{\cite[Theorem 1]{hoeffding}}\\
    &\leq \transdelta \tag{Required bound on error probability}
\end{align*}

Solving the last two lines for the confidence width $c$ yields $c \geq \sqrt{\frac{\ln(\transdelta/2)}{-2 \#(s,a)}}$.
Note the difference to the confidence width that is used for the more practical algorithm: the numerator is $\ln(\transdelta/2)$. 

Using the same argument as in the proof of Assumption 1, we get that with probability at least 1-$\delta$, we have 
$\abs{\frac{\#(s,a,t)}{\#(s,a)} - \trans(s,a,t)} < c$, i.e. the difference between the empirical average and the true transition probability is less than $c$.
Thus, we get that $\widehat\trans = \frac{\#(s,a,t)}{\#(s,a)} - c$ is at least $\trans(s,a,t) - 2c$ and at most $\trans(s,a,t)$ (we do not use this upper bound for the proof of convergence, but it shows that Assumption 2 implies Assumption 1).

\medskip

Based on Assumption 2, below we prove that the error of the lower bound in the initial state $\val(\initstate) - \lb(\initstate)$ is at most $\frac \varepsilon 2$. 
The proof for the upper bound is analogous, and we arrive at the conclusion that the width of the returned confidence interval is less than $\varepsilon$ and thus prove the theorem.

It remains to show:
\[
\val(\initstate) - \lb(\initstate) \leq \frac \varepsilon 2 \]

To argue about the error aggregating on a path, we need the game to be acyclic. For this we use the well-known trick of unrolling the game with a step counter.
    
    Let $\G$ be the original game. Then $\G<r>$ is the \emph{unrolled} game for $r$ steps. It is obtained by taking the product of the state space and a step counter, i.e. the new state space is $\states \times [r]$, where $[r]$ is the set of natural numbers from 0 to $r$.
    The action space does not change; ownership of a state and the available actions only depend on the first component (the original state) and not the counter. The transition function works as the original one on the first component, and always increases the counter by one. Thus, the game is acyclic, because there is no transition back to a smaller step counter. 
    Finally, when the counter should be increased beyond $r$, the transition instead goes to a sink state. 

Let $\val(\G)$ denote the value from the initial state of a game. For an unrolled game, the reachability objective is achieved whenever a state with a target state in the first component is reached.

We have $\val(\G<r>) \leq \val(\G)$ and  $\lim_{r \to \infty} \val(\G<r>) = \val(\G)$. 
This is because the value of the unrolled game is the value that can be achieved within $r$ steps, the $r$-horizon reachability. It cannot be greater than the complete reachability, as for the first $r$ steps it is the same, and for the remaining steps it has a reachability probability of 0. Note that value iteration from below computes the $r$-horizon reachability for increasing $r$, and always stays a lower bound. Since value iteration converges in the limit~\cite[Chapter 3.2]{condonAlgo}, we also know that letting $r$ run to infinity, the value of the unrolled game approaches the value of the original game. 

    In the following, let $r$ be a number such that 
    $\val(\G) - \val(\G<r>) \leq \frac \varepsilon 4$, i.e. the error introduced by unrolling the game is at most half of the error that we allow for the lower bound; the other half is used for the error introduced by the probability approximation.

Let $\widehat\G$ be the modification of $\G$ when using $\widehat\trans$ as transition probabilities and redirecting the remaining transition probability to sink states. $\val(\widehat\G)$ is the limit of the partial BVI. In other words, it is the best lower bound that we can compute given a certain fixed $\widehat\trans$.
    Let the number of iterations of partial BVI be at least $r$. We can assume this, as the number of iterations of partial BVI is increased in every iteration of the main loop.
    Then we have that the lower bound that Algorithm 7 computes $\lb(\initstate)$ is at least $\val(\widehat{\G}_r)$.

If we prove $\val(\G<r>) - \val(\widehat{\G}_r) \leq \frac \varepsilon 4$, then we can prove our goal as follows: 
    \begin{align*}
    \val(\initstate) - \lb(\initstate) &\leq
    \val(\G) - \val(\widehat{\G}_r) \tag{By definition of $\val(\G)$ and the previous argument}\\
    &\leq (\val(\G<r>) + \frac \varepsilon 4) - \val(\widehat{\G}_r) \tag{By choice of $r$}\\
    &\leq \frac \varepsilon 4 + \frac \varepsilon 4 \tag{By the fact still to be proven: $\val(\G<r>) - \val(\widehat{\G}_r) \leq \frac \varepsilon 4$}\\
     &= \frac \varepsilon 2
    \end{align*}

It remains to show: $\val(\G<r>) - \val(\widehat{\G}_r) \leq \frac \varepsilon 4$, i.e. in the acyclic unrolled game, the error that is introduced by using $\widehat\trans$ as transition function is less than $\frac \varepsilon 4$ for a suitable choice of the number of samples $\theN$.
In the following, let $\val<r>$ denote the value of a state in $\G<r>$ and $\widehat{\val}_r$ denote the value of a state in $\widehat{\G}_r$. Note that the games $\G<r>$ and $\widehat{\G}_r$ share the state space $\states \times [r]$.

We proceed by induction to show that for every state $s \in \states \times [r]$ we have 
\[
\val<r>(s) - \widehat{\val}_r(s) \leq 2c \cdot (\frac{1}{p_{\min}})^{\abs{\states} \cdot r} \cdot \abs{\states} \cdot r
\]

For a state $s$, let $i$ be the length of the longest path from $s$ to a target or sink state. As the game is acyclic, we have $i \leq \abs{\states} \cdot r$. The claim we inductively prove is  
\[
\val<r>(s) - \widehat{\val}_r(s) \leq 2c \cdot (\frac{1}{p_{\min}})^{i} \cdot i
\]
Since every path has length at most $\abs{\states} \cdot r$, this proves our goal.

\begin{itemize}
        \item[Base case:] Let $i=0$, i.e. the considered state is a target or a sink. Then the difference between $\val<r>(s)$ and $\widehat{\val}_r(s)$ is 0, so we have $ \val<r>(s) - \widehat{\val}_r(s) = 0 = 2c \cdot (\frac{1}{{p_{\min}}})^0 \cdot 0$.
        \item[Induction hypothesis:] For a state $s \in \states \times [r]$ that reaches a target or sink state within at most $i$ steps, we have $\val<r>(s) - \widehat{\val}_r(s) \leq 2c \cdot (\frac{1}{p_{\min}})^{i} \cdot i$
        \item[Induction step:] Consider a Maximizer state $s$ that reaches a target or sink state within at most ($i+1$) steps. 
        Let $a_1 \in \arg \max_{a_1 \in \Av(s)} \val<r>(s,a_1)$ be the action maximizing the original value $\val<r>$ and similarly let $a_2$ maximize $\widehat{\val}_r(s,a_2)$. Then we have:
        \begin{align*}
             \val<r>(s) - \widehat{\val}_r(s) &= 
             \val<r>(s,a_1) - \widehat{\val}_r(s,a_2) \tag{picking the maximizing actions}\\
             &\leq \val<r>(s,a_1) - \widehat{\val}_r(s,a_1) \tag{since $\widehat{\val}_r(s,a_1) \leq \widehat{\val}_r(s,a_2)$}\\
             &=\sum_{s'} (\trans(s, a_1, s')\val<r>(s') - \widehat\trans(s, a_1, s')\widehat{\val}_r(s')) \tag{using definitions of $\val<r>$ and $\widehat{\val}_r$}\\
             &\leq 
              \frac{1}{p_{\min}} (\trans(s, a_1, t)\val<r>(t) - \widehat\trans(s, a_1, t)\widehat{\val}_r(s')).
        \end{align*}

        The last step holds, as there are at most $\frac{1}{p_{\min}}$ successors of state $s$, where $t$ is that state in $\post(s,a_1)$ that maximizes $(\trans(s, a_1, t)\val<r>(t) - \widehat\trans(s, a_1, t)\widehat{\val}_r(t))$.
        
        If we show
        \[\trans(s, a_1, t)\val<r>(t) - \widehat\trans(s, a_1, t)\widehat{\val}_r(t) \leq 2c~\cdot~(\frac{1}{p_{\min}})^i \cdot (i+1), \]
        we can conclude that 
        \begin{align*}
            \val<r>(s) - \widehat{\val}_r(s) &\leq
            \frac{1}{p_{\min}} (\trans(s, a_1, t)\val<r>(t) - \widehat\trans(s, a_1, t)\widehat{\val}_r(s'))\\
            &\leq \frac{1}{p_{\min}} \cdot 2c~\cdot~(\frac{1}{p_{\min}})^i \cdot (i+1) \\
            &= 2c~\cdot~(\frac{1}{p_{\min}})^{i+1} \cdot (i+1)
        \end{align*}
        
        Note that $t$ is a successor of $s$, and since the game is acyclic, it reaches a target or sink state within at most $i$ steps. Thus we can apply the induction hypothesis on $\val<r>(s') - \widehat{\val}_r(s')$ and use the following chain of inequations to prove our goal:
        \begin{align*}
        &\trans(s, a_1, t)\val<r>(t) - \widehat\trans(s, a_1, t)\widehat{\val}_r(t)\\
        \leq~&\trans(s, a_1, t)\val<r>(t) - (\trans(s, a_1, t) - 2c) \widehat{\val}_r(t)\tag{since $\widehat\trans(s,a_1,t) \geq \trans(s,a_1,t) - 2c$ by Assumption 2}\\
        =~&\trans(s, a_1, t) \cdot (\val<r>(t) - \widehat{\val}_r(t)) + 2c \cdot \widehat{\val}_r(t) 
        \tag{using distributivity twice}\\
        \leq~& \trans(s, a_1, t) \cdot (2c \cdot (\frac{1}{{p_{\min}}})^i \cdot i )+ 2c \cdot \widehat{\val}_r(t) \tag{by induction hypothesis}\\
        =~& 2c \cdot (\trans(s, a_1, t) \cdot (\frac{1}{{p_{\min}}})^i \cdot i + \widehat{\val}_r(t) ) \tag{distributivity}\\
        \leq~& 2c \cdot ( (\frac{1}{{p_{\min}}})^i \cdot i + 1) \tag{since $\trans$ and $\widehat{\val}_r$ are at most 1}\\
        =~& 2c \cdot (\frac{1}{{p_{\min}}})^i \cdot (i + p_{\min}^i) \tag{distributivity}\\
        \leq~& 2c \cdot (\frac{1}{{p_{\min}}})^i \cdot (i + 1) \tag{since $p_{\min} \leq 1$}
        \end{align*}
    \end{itemize}
    
    Dually, the same can be done for a state of Minimizer.
    
    So the induction is complete, and we arrive at the conclusion that the error in the initial state can be bounded.
    Thus we get the following expression for the error between the games 
    \[
    \val(\G<r>) - \val(\widehat{\G}_r) \leq 2c \cdot (\frac{1}{p_{\min}})^{\abs{\states} \cdot r} \cdot \abs{\states} \cdot r.
    \]

It remains to show that we can choose an $\theN$ such that $2c \cdot (\frac{1}{p_{\min}})^{\abs{\states} \cdot r} \cdot \abs{\states} \cdot r \leq \frac \varepsilon 4$.

Reordering the inequation yields:
\[ 
c \leq \frac \varepsilon 8 \cdot p_{\min}^{\abs{\states} \cdot r} \cdot \frac{1}{\abs{\states} \cdot r}
\]

The confidence width is given by $c \geq \sqrt{\frac{\ln(\transdelta/2)}{-2 \#(s,a)}}$ (recall that we needed the more conservative confidence width for Assumption 2).

Therefore, denoting $\#(s,a)$ by $n$, we have
\begin{align*}
    \sqrt{\frac{\log(\transdelta/2)}{-2n}} &\leq \frac \varepsilon 8 \cdot p_{\min}^{\abs{\states} \cdot r} \cdot \frac{1}{\abs{\states} \cdot r}\\
    \implies n &\geq \frac{-32\log(\transdelta/2) \cdot \abs{\states}^2 \cdot r^2}{\varepsilon^2 \cdot p_{\min}^{2\abs{\states}\cdot r}}
\end{align*}

In other words if every relevant transition\footnote{
    Note that the guiding heuristic picks the correct actions to sample the relevant parts of the state space, since by Theorem \ref{thm:anyN} the bounds are correct and by \cite[Theorem 3]{BCC+14} simulation based asynchronous value iteration converges.}
    in the partial model is chosen at least $n$ times, then we would achieve the desired confidence width.
    Then the error of $\val(\widehat{G}_r)$ is bounded by $\frac \varepsilon 4$, the overall error of the lower bound by $\frac \varepsilon 2$, and by repeating the argument for the upper bound, the overall error by $\varepsilon$.
    
    It remains to prove that we choose an $\theN$ such that every relevant transition is played at least $n$ times ($\theN$ is a constant and independent of $k$).

    Let $p = p_{\min}^{\abs{\states}}$ be the minimum probability with which some transition can be played. 
    We want to compute the number of samples $\theN$ required so that every transition is triggered at least $n$ times.
    Let $X$ be a random variable counting the number of times the least likely transition has been triggered. 
    $X$ is distributed according to a Binomial distribution with parameters $\theN$ and $p$. 
    The cumulative distribution function $F(X \leq n) = \sum_{i=0}^n \binom{\theN}{i} p^i (1-p)^{n-i}$ gives the probability that the least likely transition is triggered at least $k$ times when $N$ samples are taken.
    If we choose an $\theN$ such that $F(X \geq n) \geq 1/2$, then it means that every relevant transition will be triggered at least $n$ times with a probability of $\frac{1}{2}$, which means that the partial BVI will succeed with a probability of $\frac{1}{2}$. 
    If the partial BVI fails, then we repeatedly try it until it succeeds.
    Note that hence we succeed almost surely, since the probability of succeeding in any BVI phase is $\sum_{i=1}^\infty \frac 1 {2^i}$.
    An $\theN$ for which $F(X \geq n) \geq \frac{1}{2}$ exists.

\end{proof}

}{}

\end{document}